\newenvironment{myalgorithm}[2][\columnwidth]
{\begin{myfloat}[!b]
	\begin{center}
   \begin{minipage}{#1}
     \begin{algorithm}[H]
     \caption{#2}
     \algsetup{linenosize=\normalsize, linenodelimiter=.}
       \begin{algorithmic}[1]\normalsize}
{      \end{algorithmic}
     \end{algorithm}
   \end{minipage}\vspace{1em}
 \end{center}
 \end{myfloat}}
\newcommand{\maxrank}{\mathsf{MaxRank}}
\newcommand{\rank}{\mathsf{rank}}
\newcommand{\proj}{\mathsf{proj}}
\newcommand{\vote}{\mathsf{vote}}
\newcommand{\capac}{\mathsf{cap}}
\newcommand{\wt}{\mathsf{wt}}
\newcommand{\res}{\mathsf{residual}}
\newtheorem{new-claim}{Claim}
\title{Popular Matchings with Multiple Partners}
\author{Florian Brandl\inst{1}\ \ \and \ \ Telikepalli Kavitha\inst{2}}
\institute{Technische Universit\"at M\"unchen, Germany. \email{brandlfl@in.tum.de} \and Tata Institute of Fundamental Research, India. \email{kavitha@tcs.tifr.res.in}}
\begin{document}
\pagestyle{plain}
\maketitle

\begin{abstract}
  Our input is a bipartite graph $G = (A \cup B,E)$ where each vertex in $A \cup B$ has a preference list strictly ranking its neighbors.
  The vertices in $A$ and in $B$ are called {\em students} and {\em courses}, respectively.
  Each student $a$ seeks to be matched to $\capac(a) \ge 1$ courses while each course $b$ seeks $\capac(b) \ge 1$ many students to be matched
  to it. The Gale-Shapley algorithm computes a pairwise-stable matching (one with no blocking edge) in $G$ in linear time. We consider the
  problem of computing a {\em popular} matching in $G$ -- a matching $M$ is popular if $M$ cannot lose an election to any matching where
  vertices cast votes for one matching versus another.
  Our main contribution is to show that a max-size popular matching in $G$ can be computed by the {\em 2-level Gale-Shapley} algorithm in
  linear time. This is an extension of the classical Gale-Shapley algorithm and we prove its correctness via linear programming.
\end{abstract}

\section{Introduction}
\label{intro}

We study the many-to-many matching problem in bipartite graphs: formally, this is given by a set $A$ of vertices (these vertices will be
called {\em students}) and a set $B$ of vertices (these will be called {\em courses}), where every vertex $u$ has an integral capacity 
$\capac(u) \ge 1$. Every student $a$ seeks to be matched to $\capac(a)$ many courses and every course $b$ seeks
$\capac(b)$ many students to be matched to it. Moreover, every student $a\in A$ has a strict ranking $\succ_a$ over courses that are acceptable
to $a$ and every course $b$ has a strict ranking $\succ_b$ over students that are acceptable to $b$. The set of mutually acceptable pairs
is given by $E\subseteq A\times B$. Thus our input is a bipartite graph $G = (A \cup B,E)$ and
the preferences of a vertex are expressed as an ordered list of its neighbors, e.g., $u\colon v, v'$ 
denotes the preference $v\succ_u v'$, i.e., $u$ prefers $v$ to $v'$.
 
\begin{definition}
  A matching $M$ in $G = (A\cup B, E)$ is a subset of $E$ such that $|M(u)| \le \capac(u)$ for each
  $u \in A \cup B$, where $M(u) = \{v: (u,v) \in M\}$.%
\end{definition}

The goal is to compute an {\em optimal} matching in $G$. The usual definition of optimality here
has been {\em pairwise-stability}~\cite{Roth84b}. A matching $M$ in $G$ is said to be pairwise-stable if there is no student-course 
pair $(a,b)$ that ``blocks'' $M$. We say a pair $(a,b)\in E\setminus M$ blocks $M$ if (1)~either $a$ has less than $\capac(a)$ partners in $M$ or 
$a$ prefers $b$ to its worst partner in $M$ and (2)~either $b$ has less than $\capac(b)$ partners in $M$ or $b$ prefers $a$ to its 
worst partner in $M$. It is known that pairwise-stable matchings always exist~\cite{Roth84b} and the Gale-Shapley algorithm~\cite{GS62}
for the one-to-one variant, or the {\em marriage} problem, can be easily generalized to find such a matching in $G = (A \cup B,E)$.
The many-to-one variant, also called the {\em hospitals/residents} problem, where $\capac(a) = 1$ for every $a \in A$,
was studied by Gale and Shapley~\cite{GS62} who showed that their algorithm for the marriage problem extends to the hospitals/residents problem.

Since a (pairwise) stable matching is a maximal matching in $G$, its size is at least $|M_{\max}|/2$, where 
$M_{\max}$ is a max-size matching in $G$. This bound can be tight as shown by the following simple 
example: let $A = \{a,a'\}$ and $B=\{b,b'\}$ where each vertex has capacity 1 and the edge set is 
$E = \{(a,b),(a,b'),$ $(a',b)\}$. The preferences are shown in the table below.
Here the only stable matching (red line) is $S = \{(a,b)\}$, which is of size $1$.
However, the max-size matching (dashed lines) $M_{\max} = \{(a',b),(a,b')\}$ is of size $2$.

	\begin{center}
		
		\begin{minipage}[c]{0.45\textwidth}
			
			\centering
			\begin{align*}
				&a\colon b, b'\qquad\qquad &&b\colon a, a'\\
				&a'\colon b &&b'\colon a\\
			\end{align*}
			
		\end{minipage}
		\hfill
		\begin{minipage}[c]{0.45\textwidth}
			
			\centering 
  			\begin{tikzpicture}
		
  				\def\radius{.15cm}
		
  				\node[label = left:{$a$}, draw, circle, fill, minimum size = \radius, inner sep = 0pt] (a) at (0,.5) {};
  				\node[label = left:{$a'$}, draw, circle, fill, minimum size = \radius, inner sep = 0pt] (a') at (0,-.5) {};

  				\node[label = right:{$b$}, draw, circle, fill, minimum size = \radius, inner sep = 0pt] (b) at (3,.5) {};
  				\node[label = right:{$b'$}, draw, circle, fill, minimum size = \radius, inner sep = 0pt] (b') at (3,-.5) {};

  				\draw[red, thick] (a) to (b);
  				\draw[dashed, thick] (a) to (b');
  				\draw[dashed, thick] (a') to (b);		
	
  			\end{tikzpicture}
	
		\end{minipage}
	
	\end{center}

It can be shown that all pairwise-stable matchings have to match the same set of vertices and every 
vertex gets matched to the same capacity in every pairwise-stable matching. In the hospitals/residents setting,
this is popularly called the ``Rural Hospitals Theorem''~\cite{GS85,Roth86a}. More precisely, 
Roth~\cite{Roth86a} showed that not only is every hospital matched to the same number of residents in 
every stable matching, but moreover, every hospital that is not matched up to its capacity in some stable 
matching is actually matched to the same \emph{set of residents} in any stable matching. Thus the 
notion of stability is very restrictive.

From a social point of view, it seems desirable to have a high number of students registered for courses to make effective use of available
resources. Similarly, in the hospitals/residents setting, it seems desirable to have a higher number of residents matched to hospitals in order
to keep few residents unemployed and guarantee sufficient staffing for hospitals. The latter point particularly applies to 
rural hospitals that oftentimes face the problem of being understaffed with residents by the National Resident Matching 
Program in the USA (cf. \cite{Roth84,Roth86a}). 
This motivates relaxing the notion of ``absence of blocking edges'' to a weaker notion of stability so as to obtain matchings 
that are guaranteed to be significantly larger than $|M_{\max}|/2$. Note that we do not wish to ignore the preferences of 
vertices and impose a max-size matching on them as such a way of assignment may be highly undesirable from a social viewpoint. 
Instead our approach is to replace the \emph{local} stability notion of ``no blocking edges'' with a weaker notion of {\em global} stability
that achieves more ``global good'' in the sense that its size is always at least $\gamma\cdot|M_{\max}|$ for some $\gamma > 1/2$.

\subsection{Popularity}
To this end, we consider the notion of popularity, which was introduced by G\"ardenfors~\cite{Gar75} in the one-to-one matching setting
or the stable marriage problem: the input here consists
of a set of men and a set of women, where each person seeks to get matched to at most one person from the opposite sex. Thus
every vertex has capacity $1$ here.
Popularity is based on voting by vertices on the set of feasible matchings.
In the one-to-one setting, the preferences of a vertex over its neighbors are extended to preferences over matchings by postulating that every
vertex orders matchings in the order of its partners in these matchings. 
This postulates that vertices do not care which other pairs are formed.
A matching is popular if it never loses a head-to-head election against any matching where each vertex casts a vote.
Thus popular matchings are (weak) {\em Condorcet winners}~\cite{condorcet} in the corresponding voting instance.
The Condorcet paradox shows that collective preferences can be cyclic and so there need not be a Condorcet winner;
this is the source of many impossibility results in social choice theory such as Arrow's impossibility theorem.

However, in the context of matchings in the stable marriage problem, G\"ardenfors~\cite{Gar75} showed that every stable matching is popular.
Hence the fact that stable matchings always exist here~\cite{GS62} implies that popular matchings always exist.
This is quite remarkable given that popular matchings correspond to (weak) Condorcet winners.
In the one-to-one matching setting, there is a vast literature on popular matchings~\cite{BIM10,HK11,Kav12,Hirakawa-MatchUp15,CK16,Kav16,HK17}.

Here we generalize the notion of popularity to the many-to-many matching setting. This requires us to specify how vertices vote over different
subsets of their neighbors. In particular, one may want to allow a single vertex to cast multiple votes if its capacity is greater than~$1$.
Our definition of voting by a vertex between two subsets of its neighbors is the following: first remove all vertices that are contained in both
sets; then find a bijection from the first set to the second set and compare every vertex with its image under this bijection (if the sets are
not of equal size, we add dummy vertices that are less preferred than all non-dummy vertices); the number of wins minus the number of losses
is cast as the vote of the vertex. The vote may depend on the bijection that is chosen, however. 

Our definition is based on the bijection that \emph{minimizes} the vote, which results in a rather restrictive notion of popularity. 
We show however that even for this notion of popularity, every stable matching is popular.
In particular, popular matchings always exist. As a consequence, popular matchings always exist for every notion of popularity that is less
restrictive than our notion of popularity. Our goal is to find a {\em max-size} popular matching and crucially, it turns out that the size of a
max-size popular matching is independent of the bijection that is
chosen for the definition of popularity. We formalize all these notions below.

\subsection{Definitions}
In the one-to-one setting, given any two matchings $M_0, M_1$
and a vertex $u$, we say $u$ prefers $M_0$ to $M_1$ if $u$ prefers $M_0(u)$ 
to $M_1(u)$, where $M_i(u)$ is $u$'s partner in $M_i$, for $i = 0,1$, and  
we say ``$M_i(u) =$ null'' if $u$ is left unmatched in matching $M_i$
-- note that the null option is the least preferred state for any vertex.
Define the function $\vote_u(v,v')$ for any vertex $u$ and neighbors $v, v'$ of $u$
as follows: $\vote_u(v,v')$ is 1 if $u$ prefers $v$ to $v'$, it is 
$-1$ if $u$ prefers $v'$ to $v$, and it is 0 otherwise (i.e., if $v = v'$). In the 
one-to-one setting, $\Delta_u(M_0,M_1)$, which is $u$'s vote for $M_0$ versus $M_1$, is defined to
be $\vote_u(M_0(u),M_1(u))$.

In the many-to-many setting, while comparing one matching with another, we allow a vertex to cast more than one vote.
For instance, when we compare the preference of vertex $u$ with $\capac(u) = 3$ for
$S_0 = \{v_1,v_2,v_3\}$ versus $S_1 = \{v_4,v_5,v_6\}$ (where $v_1 \succ_u v_2 \succ_u \cdots \succ_u v_6$), 
we would like $u$'s vote to capture the fact that $u$ is better-off by {\em 3 partners} in $S_0$ 
when compared to $S_1$. 
So we define $u$'s vote for $S_0$ versus $S_1$ as follows.
Let $S_0, S_1$ be any two subsets of the set of $u$'s neighbors where we add some occurrences of ``null'' to the smaller of
$S_0, S_1$ to make the two sets of the same size. We will view the sets $S'_0 = S_0\setminus S_1$ and $S'_1 = S_1\setminus S_0$
as arrays $\langle S'_i[1],\ldots,S'_i[k]\rangle$ (for $i = 0,1$) where $k = |S_0| - |S_0\cap S_1| = |S_1| - |S_0\cap S_1|$.
The preference of vertex $u$ for $S_0$ versus $S_1$, denoted by $\delta_u(S_0,S_1)$, is defined
as follows:
\begin{equation}
  \label{eq:delta}
  \delta_u(S_0,S_1) = \min_{\sigma \in \Pi[k]}\sum_{i = 1}^k \vote_u(S'_0[i],S'_1[\sigma(i)]),
\end{equation}  
where $\Pi[k]$ is the set of permutations on $\{1,\ldots,k\}$.
Let $\Delta_u(M_0,M_1) = \delta_u(S_0,S_1)$, where $S_0 = M_0(u)$ and $S_1 = M_1(u)$.
So $\Delta_u(M_0,M_1)$ counts the number of votes by $u$ for $M_0(u)$ versus $M_1(u)$ 
when the sets $S'_0 = M_0(u)\setminus M_1(u)$ and $S'_1 = M_1(u)\setminus M_0(u)$ are being compared 
in the order that is most adversarial or {\em negative} for $M_0$. That is, this order 
$\sigma \in \Pi[k]$ of comparison between elements of $S'_0$ and $S'_1$ gives the least value for 
$n^+ - n^-$,  where $n^+$ is the number of indices $i$ such that $S'_0[i] \succ_u S'_1[\sigma(i)]$ 
and $n^-$ is the number of indices $i$ such that $S'_0[i] \prec_u S'_1[\sigma(i)]$.
Note that $\Delta_u(M_0,M_1) + \Delta_u(M_1,M_0) \le 0$ and it can be {\em strictly negative}.

For instance, when a vertex $u$ with $\capac(u) = 3$ compares two subsets $S_0 = \{v_1,v_3,v_5\}$ and 
$S_1 = \{v_2,v_4,v_6\}$ (where $v_1 \succ_u v_2 \succ_u \cdots \succ_u v_6$), we
have $\delta_u(S_0,S_1) = -1$ since comparing the following pairs results in the least value of
$\delta_u(S_0,S_1)$: this pairing is ($v_1$ with $v_6$),  ($v_3$ with $v_2$), ($v_5$ with $v_4$). This 
makes $\delta_u(S_0,S_1) = 1 -1 -1 = -1$. While computing $\delta_u(S_1,S_0)$, the pairing would be 
($v_2$ with $v_1$),  ($v_4$ with $v_3$), ($v_6$ with $v_5$): then $\delta_u(S_1,S_0) = -1 -1 -1 = -3$.

 For any two matchings $M_0$ and $M_1$ in $G$, we compare them using the
 function $\Delta(M_0,M_1)$ defined as follows. 
\begin{equation}
\label{def:Delta}
\Delta(M_0,M_1) = \sum_{u \in A \cup B}\ \Delta_u(M_0,M_1).
\end{equation}

We say $M_0$ is at least as popular as $M_1$ if $\Delta(M_0,M_1) \ge 0$ and $M_0$ is more 
popular than $M_1$ if $\Delta(M_0,M_1) > 0$. If $\Delta(M_0,M_1) \ge 0$ then for every vertex $u$ 
in $A \cup B$: {\em no matter in which order} the elements of $S'_0 = M_0(u)\setminus M_1(u)$ and 
$S'_1 = M_1(u)\setminus M_0(u)$ are compared against each other by $u$ in the evaluation of 
$\Delta_u(M_0,M_1)$ -- when we sum up the total number of votes cast by all vertices,
the votes for $M_1$ can {\em never} outnumber the votes for $M_0$.

\begin{definition}
\label{def:pop-match}  
$M_0$ is a popular matching in $G = (A\cup B,E)$ if $\Delta(M_0,M_1) \ge 0$ for all matchings $M_1$ 
in $G$.
\end{definition}

Thus for a matching $M_0$ to be popular, it means that $M_0$ is at least as popular as every 
matching in $G$, i.e., there is no matching $M_1$ with $\Delta(M_0,M_1) < 0$. If there 
exists a matching $M_1$ such that $\Delta(M_0,M_1) < 0$ then this is taken as a certificate of 
{\em unpopularity} of $M_0$. Note that it is possible that 
both $\Delta(M_0,M_1)$ and $\Delta(M_1,M_0)$ are negative, i.e., for each vertex $u$ there is some 
order of comparison between the elements of $S'_0 = M_0(u)\setminus M_1(u)$ with those in 
$S'_1 = M_1(u)\setminus M_0(u)$ so that when we sum up the total number of votes cast by all the vertices, 
the number for $M_1$ is more than the number for $M_0$; similarly for each $u$ there is another order of 
comparison between the elements of $S'_0$ with those in $S'_1$ so that when we sum up the total number of 
votes cast by all the vertices, the number for $M_0$ is more than the number for $M_1$. In this case neither
$M_0$ nor $M_1$ is popular. It is not obvious whether popular matchings always exist in $G$.

Our definition of popularity may seem too strict and restrictive since for each vertex $u$, we 
choose the most negative or adversarial ordering for $M_0(u)\setminus M_1(u)$ versus $M_1(u)\setminus M_0(u)$
while calculating $\Delta_u(M_0,M_1)$. A more relaxed definition may be to order the sets 
$S'_0 = M_0(u)\setminus M_1(u)$ and  $S'_1 = M_1(u)\setminus M_0(u)$ in increasing order of 
preference of $u$ and take $\sum_i \vote_u(S'_0[i],\,S'_1[i])$ as $u$'s vote. 
An even more relaxed definition may be to choose 
the most favorable or {\em positive} ordering for $S'_0$ versus $S'_1$ while calculating 
$\Delta_u(M_0,M_1)$. 
Note that as per (\ref{eq:delta}) we have:
\begin{equation}
  \label{eqn:weakly-popular}
  -\Delta_u(M_0,M_1) = -\min_{\sigma \in \Pi[k]}\sum_{i = 1}^k \vote_u(S'_0[i],S'_1[\sigma(i)]) = \max_{\pi \in \Pi[k]}\sum_{i = 1}^k \vote_u(S'_1[i],S'_0[\pi(i)]).
\end{equation}

\begin{definition}
\label{def:weakly-pop}
Call a matching $M_1$ {\em weakly popular} if $\Delta(M_0,M_1) \le 0$, i.e., $-\Delta(M_0,M_1) \ge 0$, 
for all matchings $M_0$ in $G$.
\end{definition}

Thus it follows from (\ref{def:weakly-pop}) that $M_1$ is a weakly popular matching if the sum of votes for $M_1$
is at least the sum
of votes for any matching $M_0$ when each vertex $u$ compares $M_1(u)\setminus M_0(u)$ versus
$M_0(u)\setminus M_1(u)$ in the ordering (as given by $\pi$) that is most favorable for $M_1$. 
In the one-to-one setting, we have $\Delta(M_0,M_1) + \Delta(M_1,M_0) = 0$ for any pair of matchings $M_0, M_1$
since $\Delta_u(M_0,M_1) = \vote_u(M_0(u),M_1(u)) = -\vote_u(M_1(u),M_0(u)) = -\Delta_u(M_1,M_0)$ for each $u$;
thus the notions of ``popularity'' and ``weak popularity'' coincide here.
In the many-to-many setting, weak popularity is a more relaxed notion than popularity.

We choose a strong definition of popularity so that a matching that is popular according to our notion
will also be popular according to any notion ``in between'' between popularity and weak popularity. However
this breadth may come at a price as it could be the case that a max-size weakly popular matching is larger
than a  max-size popular matching.

\subsection{Our results}
We will show that every pairwise-stable matching in $G = (A \cup B,E)$ is popular, 
thus our (seemingly strong) definition of popularity is a relaxation of pairwise-stability.
We will present a simple linear time algorithm for computing a max-size popular matching $M_0$
in $G$ and show that $|M_0| \ge \frac{2}{3}\cdot|M_{\max}|$. 

We also show that $M_0$ is more popular than every {\em larger} matching, i.e., $\Delta(M_0,M_1) > 0$ (refer to (\ref{def:Delta}))
for any matching $M_1$ that is larger than $M_0$. Thus $M_0$ is also a {\em max-size weakly popular matching} in $G$
as no matching $M_1$ larger than $M_0$ can be weakly popular due to the fact that $\Delta(M_0,M_1) > 0$.
Thus surprisingly, we lose nothing in terms of the size of our matching by sticking to a strong notion of popularity.

Akin to the rural hospitals theorem, we show that all max-size popular matchings have to match the 
same set of vertices and every vertex gets matched to the same capacity in every max-size popular 
matching. However, even in the hospitals/residents setting, hospitals that are not matched up to their capacity in some max-size popular matching do {\em not} need to be matched to the same sets of residents in any 
max-size popular matching, which is in contrast to stable matchings \cite{Roth86a}.

\subsubsection{Techniques.} 
Our algorithm is an extension of the 2-level Gale-Shapley algorithm from~\cite{Kav12} to find a 
max-size popular matching in a stable marriage instance. 
While the analysis of the 2-level Gale-Shapley algorithm in \cite{Kav12}
is based on a structural characterization of popular matchings (from \cite{HK11}) on forbidden 
alternating paths and alternating cycles, we use linear programming here to show a simple and self-contained proof
of correctness of our algorithm. We would like to remark that the structural characterization from \cite{HK11}
and the proof of correctness from \cite{Kav12} 
can be extended (in a rather laborious manner) to show the correctness of the generalized algorithm in
our more general setting as well, however our proof of correctness is much simpler and this yields a much easier
proof of correctness of the algorithm in \cite{Kav12}.
Our linear programming techniques are based on a linear program used in \cite{KMN09} to find a popular fractional 
matching in a bipartite graph with {\em 1-sided  preference lists}.

\subsection{Background and related work}
The first algorithmic question studied for popular matchings was in the domain of 1-sided preference
lists~\cite{AIKM07} where it is only vertices on the left, who are {\em agents}, that have preferences; the
vertices on the right are {\em objects} and they have no preferences. Popular matchings need not always exist 
here, however fractional matchings that are popular always exist and can be computed in polynomial time
via linear programming~\cite{KMN09}. 
Popular matchings always exist in any instance of the stable marriage problem  with strict preference
lists since every stable matching is popular~\cite{Gar75}.

Efficient algorithms to find a max-size popular matching in
a stable marriage instance are known~\cite{HK11,Kav12} and a subclass of max-size  popular matchings called
dominant matchings was studied in \cite{CK16} and it was shown that these matchings coincide with stable matchings
in a larger graph.
A polynomial time algorithm was shown in \cite{Kav16} to find a min-cost popular half-integral matching
when there is a cost function on the edge set and it was shown in~\cite{HK17} that the popular fractional matching polytope here
is half-integral.
When preference lists admit ties, the problem of determining if a stable marriage instance $(A \cup B,E)$ admits a popular matching
or not is NP-hard~\cite{BIM10} and the NP-hardness of this problem holds even when ties are allowed on only one side
(say, in the preference lists of vertices in $A$)~\cite{CHK15}. 

Very recently and independent of our work, the problem of computing a max-size popular matching in an extension of the
hospitals/residents problem, i.e., in the {\em many-to-one} setting, was considered by Nasre and Rawat~\cite{NR17}. The notion of 
``more popular than'' in \cite{NR17} is weaker than ours: in order to compare matchings $M_0$ and $M_1$, in \cite{NR17} every 
hospital $h$ orders $S'_0 = M_0(h)\setminus M_1(h)$ and $S'_1 = M_1(h)\setminus M_0(h)$ in increasing order of preference of $h$ 
and $\sum_i \vote_h(S'_0[i],\,S'_1[i])$ is $h$'s vote for $M_0$ versus $M_1$. An efficient algorithm was shown for their problem
by reducing it to a stable matching problem in a larger graph -- this closely follows the method and techniques 
in \cite{HK11,Kav12,CK16} for the max-size popular matching problem in the one-to-one setting. Note that popularity as per their definition 
is ``in between'' our notions of popularity and weak popularity.

The stable matching problem in a marriage instance has been extensively studied -- we refer to the books~\cite{GI89,manlove2013}
on this topic. The problem of computing stable matchings or its variants in the hospitals/residents 
setting is also well-studied~\cite{huang2010,askalidis2013,HIM16,IMS00,IMS03}.
The stable matching algorithm in the 
hospitals/residents problem has several real-world applications -- it is used to match residents to hospitals in Canada~\cite{CRMS}
and in the USA~\cite{NRMP}. The many-to-many stable matching problem has also received considerable attention~\cite{Roth84b,Bla88,Sot99}.

\section{Our algorithm}\label{sec:hosp-res-algo}

A first attempt to solve the max-size popular matching problem in a many-to-many instance $G = (A \cup B, E)$ may be to define an
equivalent one-to-one instance $G' = (A' \cup B', E')$ by making $\capac(u)$ copies of each $u \in A \cup B$ and $\capac(a)\cdot\capac(b)$
many copies of each edge $(a,b)$; the max-size popular matching problem in $G'$
can be solved using the algorithm in \cite{Kav12} and the obtained matching $\tilde{M}$ in $G'$ can be mapped to a matching $M$
in $G$. In the first place, one should ensure that there are no multi-edges in $M$. 
The matching $M$ will be popular,
however it is not obvious that $M$ is a {\em max-size} popular matching in $G$ as
every popular matching in $G$ need not be realized as some popular matching in $G'$ (Appendix~A has such an example).
Thus one needs to show that there is at least one max-size popular matching in $G$ that can be
realized as a popular matching in $G'$; we do not pursue this approach here as the running time of the max-size popular matching
algorithm in $G'$ is $O(mn)$ (linear in the size of $G'$) where $|E| = m$ and $|A|+|B|=n$.

In this section we describe a simple $O(m+n)$ algorithm called the generalized {\em 2-level Gale-Shapley algorithm} 
to compute a max-size popular matching in $G = (A\cup B,E)$. This algorithm works on the graph 
$H = (A''\cup B, E'')$ defined as follows: $A''$ consists of two copies $a^0$ and $a^1$ of 
every student $a$ in $A$, i.e., $A'' = \{a^0, a^1: a \in A\}$. The set $B$ of
courses in $H$ is the same as in $G$ and the edge set here is $E'' = \{(a^0,b), (a^1,b): (a,b) \in E\}$.

\begin{myalgorithm}[14cm]{\ \ \ \em Input: $H = (A''\cup B,E'')$; \ \ \ \ Output: A matching $M$ in $H$}\label{alg:many-many}
\STATE{Initialize $Q = \{a^0: a\in A\}$ and $M = \emptyset$. Set $\res(a) = \capac(a)$ for all $a \in A$.}
\WHILE{$Q \ne \emptyset$}
\STATE{delete the first vertex from $Q$: call this vertex $a^i$.}
\WHILE{$a^i$ has one or more neighbors in $H$ to propose to \underline{and} $\res(a) > 0$}
\STATE{-- let $b$ be the most preferred neighbor of $a^i$ in $H$ that $a^i$ has not yet proposed to.} 

\COMMENT{{\em So every neighbor of $a^i$ in the current graph $H$ that is ranked better than $b$ is already matched to $a^i$ in $M$.}}
\STATE{-- add the edge $(a^i,b)$ to $M$.} 
\IF{$i=1$ and $b$ is already matched to $a^0$}
\STATE{-- delete the edge $(a^0,b)$ from $M$.}
\COMMENT{{\em So $(a^0,b)$ in $M$ gets replaced by $(a^1,b)$.}}
\ELSE
\STATE{-- set $\res(a) = \res(a) - 1$.}
\COMMENT{{\em since $|M(a)|$ has increased by 1}}

\IF{$b$ is matched to more than $\capac(b)$ partners in $M$}
\STATE{-- let $v^j$ be $b$'s worst partner in $M$. Delete the edge $(v^j,b)$ from $M$.}

\COMMENT{{\em Note that ``worst'' is as per preferences in $H$.}}
\STATE{-- set $\res(v) = \res(v) + 1$ and if $v^j \notin Q$ then add $v^j$ to $Q$.}
\ENDIF
\ENDIF
\IF{$b$ is matched to $\capac(b)$ many partners in $M$}
\STATE{-- delete all edges $(u,b)$ from $H$ where $u$ is a neighbor of $b$ in $H$ that is ranked worse than $b$'s worst partner in $M$.}
\COMMENT{{\em ``Worse'' is as per preferences in $H$.}}
\ENDIF
\ENDWHILE
\IF{$\res(a) > 0$ and $i =0$}
\STATE{-- add $a^1$ to $Q$.}\ \ \ \COMMENT{{\em Though $\res(a) > 0$, the condition in the above while-loop does not hold, i.e., $a^0$ has no neighbors in $H$ to propose to; hence $a^1$ gets activated.}}
\ENDIF
\ENDWHILE
\STATE{Return the matching $M$.}
\end{myalgorithm}

The preference list of $a^i$ (for $i = 0,1$) is exactly the same as the preference list of $a$.
The elements in the set $\{a^i: a \in A\}$ will be called {\em level~$i$} students, for $i = 0,1$.
Every $b \in B$ prefers any level~1 neighbor to any level~0 neighbor: within the set of level~$i$ 
neighbors (for $i= 0,1$), $b$'s preference order is the same as its original preference order. For 
instance, if a course $b$ has only 2 neighbors $a$ and $v$ in $G$ where $a \succ_b v$, the 
preference order of $b$ in $G'$ is: $a^1, \ v^1, \ a^0, \ v^0$.
The {\em sum} of capacities of $a^0$ and $a^1$ will be $\capac(a)$ and we will use $\res(a)$ to denote
the $\capac(a) - |M(a)|$, where $M$ is the current matching.
At any point in time, only one of $a^0$ and $a^1$ will be {\em active}  in our algorithm.

A description of our algorithm is given as Algorithm~\ref{alg:many-many}.
To begin with, all level~0 students are active in our algorithm and all level~1 students are 
inactive. We keep a queue $Q$ of all the active students and they propose as follows:
\begin{itemize}
\item every active student $a^i$, where $a$ is not fully matched, 
  proposes to its most preferred neighbor in $H$ that it has not yet proposed to (lines~4-5 of Algorithm~\ref{alg:many-many})
\item if $a^0$ has already proposed to all its neighbors in $H$ and $a$ is not fully matched,
then $a^0$ becomes inactive and $a^1$ becomes active and it joins the queue $Q$ (lines~20-21).
\end{itemize}

When a course $b$ receives a proposal from $a^i$, the vertex $b$ accepts this offer (in line~6).
In case $b$ is already matched to $a^0$ and it now received a proposal from $a^1$, the edge $(a^0,b)$ in $M$
is replaced by the edge $(a^1,b)$ (otherwise $b$ would end up being matched to $a$ with multiplicity 2
which is not allowed) -- this is done in lines 7-8 of Algorithm~\ref{alg:many-many}.

If $b$ is now matched to more than $\capac(b)$ partners then $b$ rejects its worst partner $v^j$ in the 
current matching and so $\res(v)$ increases by 1 and $v^j$
joins $Q$ if it is not already in $Q$ (in lines 11-13).
If $b$ is now matched to $\capac(b)$ partners then we delete all edges $(u,b)$ from $H$ where 
$u$ is a neighbor of $b$ in $H$ that is ranked worse than $b$'s worst partner in the current matching 
-- so no such resident $u$ can propose to $b$ later on in the algorithm (lines~16-17).
Once $Q$ becomes empty, the algorithm terminates. 

\smallskip

\noindent{\bf The matching $M_0$.}
Let $M$ be the matching returned by this algorithm and let $M_0$ be the matching in $G$ that is
obtained by projecting $M$ to the edge set of $G$,
i.e., $(a,b)\in M_0$ if and only if $(a^i,b)\in M$ for some $i \in \{0,1\}$.
We will prove that $M_0$ is a max-size popular matching in Section~\ref{sec:analysis}.

\section{The correctness of our algorithm}
\label{sec:analysis}
In this section we show a sufficient condition for a matching $N$ in $G$ to be popular. This is shown
via a graph called $G'_N$: this is a bipartite graph constructed using $N$ such that $N$ gets mapped to
a {\em simple matching} $N'$ in $G'_N$, i.e., $|N'(v)| \le 1$ for all vertices $v$ in $G'_N$.

The vertex set of $G'_N$ includes $A' \cup B'$ where $A' = \{a_i: a \in A\ \text{and} \ 1 \le i \le \capac(a)\}$ and
$B' = \{b_j: b \in B\ \text{and} \ 1 \le j \le \capac(b)\}$. That is, for each vertex $u \in A \cup B$, there are $\capac(u)$ many 
copies of $u$ in $G'_N$. For each edge $(a,b)$ in $G$ such that $(a,b) \in N$, we will arbitrarily choose a distinct 
$i \in \{1,\ldots,\capac(a)\}$ and a distinct $j \in \{1,\ldots,\capac(b)\}$ and include $(a_i,b_j)$ in $N'$. If $u \in A \cup B$ 
was not fully matched in $N$, i.e., it has less than $\capac(u)$ many partners in $N$, then some $u_k$'s will be left unmatched in 
$N'$.

\begin{itemize}
  \item[1.] For each edge $(a,b)$ in $G$ such that $(a,b) \notin N$, we will have edges $(a_i,b_j)$ in $G'_N$, for all 
$1 \le i \le \capac(a)$ and $1 \le j \le \capac(b)$.
  \item[2.] For each edge $(a,b) \in N$, we will have the edge $(a_i,b_j)$ in $G'_N$ where  $(a_i,b_j)\in N'$.
\end{itemize}

Thus for any edge $e = (a,b) \notin N$, there are $\capac(a)\cdot\capac(b)$ many copies of $e$ in $G'$: these are $(a_i,b_j)$ for
all $(i,j) \in \{1,\ldots,\capac(a)\}\times\{1,\ldots,\capac(b)\}$. However for any edge $(a,b) \in N$, there is only {\em one} edge 
$(a_i,b_j)$ in $G'_N$ where $(a_i,b_j)\in N'$,
in other words, the student $a_i$ is not adjacent in $G'_N$ to course $b_{j'}$ for $j' \ne j$ and similarly, the course
$b_j$ is not adjacent in $G'_N$ to student $a_{i'}$ for $i' \ne i$. Appendix~A has an example of $G'_N$ corresponding to
a matching $N$ in a many-to-one instance $G$ (see Fig.~\ref{fig:Appendix-example}).

There are also some new vertices called ``last resort neighbors'' in $G'_N$: for any $v \in A'\cup B'$,
there is one vertex $\ell(v)$ and every vertex $v$ ranks $\ell(v)$ at the bottom of its 
preference list. 
\begin{itemize}
  \item[3.] The edge set of $G'_N$ also contains the edges $(v,\ell(v))$ for each $v \in A' \cup B'$.
\end{itemize}

The purpose of the vertex $\ell(v)$ is to capture the state of $v\in A'\cup B'$ being left unmatched in any
matching so that every matching in $G$ gets mapped to an {\em $(A' \cup B')$-complete matching} in $G'_N$, i.e.,
one that matches all vertices in $A' \cup B'$.
We will use these last resort neighbors to obtain an $(A' \cup B')$-complete matching $N^*$ from $N'$.
\[N^* = N' \cup \{(v,\ell(v)): v\in A' \cup B'\ \text{and}\ v \ \text{is\ unmatched\ in}\ N'\}.\]
Thus if a vertex $u \in A\cup B$ was not fully matched in $N$,
then some $u_i$'s will be matched to their last resort neighbors in $N^*$. We now define edge weights in $G'_N$.

\begin{itemize}
\item For any edge $e = (a_i,b_j) \in A' \times B'$: the weight of edge $e$ is 
$\wt_N(e) = \vote_a(b,N^*(a_i)) + \vote_b(a,N^*(b_j))$, where $N^*(u_k)$ is $u_k$'s partner 
  in the $(A'\cup B')$-complete matching $N^*$. Thus $\wt_N(a_i,b_j)$ is the sum of votes of $a$ and $b$ for
  each other versus $N^*(a_i)$ and $N^*(b_j)$, respectively. We have $\wt_N(e) \in \{\pm 2, 0\}$ and 
$\wt_N(e) = 2$ if and only if $e$ blocks $N$.
\item For any edge $e = (v,\ell(v))$ where $v \in A' \cup B'$: the weight of edge $e$ is 
$\wt_N(e) = \vote_v(\ell(v),N^*(v))$. 
  Thus $\wt_N(v,\ell(v)) = -1$ if $v$ was matched in $N'$ 
  and $\wt_N(v,\ell(v)) = 0$ otherwise (in which case $N^*(v) = \ell(v)$).
\end{itemize}  

Observe that every edge $e \in N^*$ satisfies $\wt_N(e) = 0$. Thus the weight of the matching $N^*$ in $G'_N$ is 0.
Theorem~\ref{main-thm} below states that if {\em every} $(A' \cup B')$-complete matching in the graph
$G'_N$ has weight at most 0, then $N$ is a popular matching in $G$.

\subsection{A sufficient condition for popularity}

\begin{theorem}
  \label{main-thm}
  Let $N$ be a matching in $G$ such that every $(A' \cup B')$-complete matching in $G'_N$ has weight at most 0. Then $N$ is  popular.
\end{theorem}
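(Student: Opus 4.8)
The plan is to verify the sufficient condition directly. Fix an arbitrary matching $M$ in $G$; I want to show $\Delta(N,M)\ge 0$, equivalently $-\Delta(N,M)\le 0$, so that $N$ is popular by Definition~\ref{def:pop-match}. By (\ref{eqn:weakly-popular}), for each vertex $u\in A\cup B$ we have $-\Delta_u(N,M)=\max_{\tau}\sum_i \vote_u\bigl(S'_1[i],\,S'_0[\tau(i)]\bigr)$, where $S'_1=M(u)\setminus N(u)$ and $S'_0=N(u)\setminus M(u)$ (padded with occurrences of null to a common size) and $\tau$ ranges over bijections between them. Let $\tau_u$ attain this maximum, so that $-\Delta(N,M)=\sum_{u\in A\cup B}\sum_i \vote_u\bigl(S'_1[i],\,S'_0[\tau_u(i)]\bigr)$. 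I will build from the bijections $\{\tau_u\}_u$ an $(A'\cup B')$-complete matching $M^*$ in $G'_N$ whose weight equals exactly this sum; since $\wt_N(M^*)\le 0$ by hypothesis, this gives $-\Delta(N,M)=\wt_N(M^*)\le 0$, and $N$ is popular because $M$ was arbitrary.

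To build $M^*$, classify the copies of each $u$ in $G'_N$ by their partner in $N^*$: a copy matched in $N^*$ to a partner in $M(u)\cap N(u)$ keeps that partner and copy in $M^*$ (such an edge of $N'$ lies in $G'_N$ and has weight $0$); a copy matched in $N^*$ to some $b'\in N(u)\setminus M(u)$ is \emph{freed}; and a copy matched to its last resort neighbor is \emph{residual}. Now, for each new partner $b\in M(u)\setminus N(u)$ with $\tau_u(b)=b'$ a genuine vertex of $N(u)\setminus M(u)$, re-match to (a copy of) $b$ the copy of $u$ that $N^*$ assigned to $b'$; if $\tau_u(b)$ is null, re-match a residual copy of $u$ to $b$; and if a vertex $b'\in N(u)\setminus M(u)$ has $\tau_u^{-1}(b')$ null, match the freed copy carrying $b'$ to its last resort neighbor. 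Every edge $(a_i,b_j)$ so placed in $M^*$ with $(a,b)\in M\setminus N$ comes from a non-$N$ edge of $G$, so all copy pairs $(a_i,b_j)$ are available in $G'_N$ (item~1 in the construction of $G'_N$); the endpoint $a_i$ is dictated by $\tau_a$ and $b_j$ by $\tau_b$, and since each $\tau_u$ is a bijection no copy of $u$ is requested twice, so $M^*$ is a genuine matching. A counting argument using $|M(u)|\le\capac(u)$ (so that $\capac(u)-|N(u)|\ge |M(u)\setminus N(u)|-|N(u)\setminus M(u)|$) shows there are always at least as many residual copies of $u$ as new partners $b$ with $\tau_u(b)$ null, so the construction never runs short; and every copy of every $u$ ends up matched, either to a partner or to its last resort neighbor, so $M^*$ is $(A'\cup B')$-complete.

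It remains to compute $\wt_N(M^*)$. Attribute the weight of each edge $(a_i,b_j)\in M^*$ with $(a,b)\in E$ to its two endpoints, giving $a$ the share $\vote_a(b,N^*(a_i))$ and $b$ the share $\vote_b(a,N^*(b_j))$, and give each last resort edge $(u_k,\ell(u_k))\in M^*$ its full weight $\vote_{u_k}(\ell(u_k),N^*(u_k))$ to $u$. Running through the construction case by case: a kept edge contributes $0$ to $u$'s share; a re-matched new partner $b\in M(u)\setminus N(u)$ contributes $\vote_u(b,N^*(u_k))$, which is $\vote_u(b,\tau_u(b))$ since $N^*(u_k)$ is $\tau_u(b)$ when that is a real vertex and is the last resort neighbor (i.e. null for the purpose of $\vote_u$) when $\tau_u(b)$ is null; a freed copy sent to its last resort neighbor contributes $\vote_u(\ell(u_k),b')=\vote_u(\mathrm{null},b')$ where $\tau_u^{-1}(b')$ is null; and a residual copy left at its last resort neighbor contributes $0$. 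Summing over the elements of the padded sets $S'_1=M(u)\setminus N(u)$ and $S'_0=N(u)\setminus M(u)$, vertex $u$'s total share is exactly $\sum_i\vote_u\bigl(S'_1[i],S'_0[\tau_u(i)]\bigr)$, whence $\wt_N(M^*)=\sum_{u\in A\cup B}\sum_i\vote_u\bigl(S'_1[i],S'_0[\tau_u(i)]\bigr)=-\Delta(N,M)$, as required. The main work -- and the part I expect to be delicate -- lies here: checking that the copy assignments on the two sides of each re-matched edge are mutually consistent so that $M^*$ really is a matching, that the supply of residual copies never runs out, and that the per-vertex weight bookkeeping lines up term by term with the bijection sums defining $\Delta_u(N,M)$.
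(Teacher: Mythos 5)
Your proposal is correct and follows essentially the same route as the paper: you build a realization $M^*$ of the arbitrary matching $M$ as an $(A'\cup B')$-complete matching in $G'_N$ by threading the copies through the optimal per-vertex bijections, verify $\wt_N(M^*)=-\Delta(N,M)$ by splitting each edge weight between its endpoints, and invoke the hypothesis; your use of the maximizing bijection $\tau_u$ for $-\Delta_u(N,M)$ via Eq.~(\ref{eqn:weakly-popular}) is just the inverse of the paper's minimizing permutation for $\Delta_u(N,M)$. The consistency, supply-of-residual-copies, and bookkeeping checks you flag as delicate are handled correctly (and in somewhat more detail than the paper itself provides).
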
  
\begin{proof}
  For any matching $T$ in $G$, we will show a realization $T^*$ of $T$ in $G'_N$ such that $T^*$ is an $(A' \cup B')$-complete 
  matching and $\wt_N(T^*) = -\Delta(N,T)$. Thus if every $(A' \cup B')$-complete matching in $G'_N$ has weight at most 0, then
  $\wt_N(T^*) \le 0$, in other words, $\Delta(N,T) \ge 0$. Since $\Delta(N,T) \ge 0$ for all matchings $T$ in $G$, the matching $N$ will be
  popular. 

  In order to construct $T^*$, corresponding to each edge $(a,b) \in T$, we will find appropriate
  indices $s \in \{1,\ldots,\capac(a)\}$ and $t \in \{1,\ldots,\capac(b)\}$, where $(a_s,b_t)$ is in $G'_N$, such that 
  $(a_s,b_t) \in T^*$; there may also be some $(u_k,\ell(u_k))$ edges in $T^*$.
  \begin{itemize}
     \item[(i)] For every edge $(a,b) \in N \cap T$ do: if $(a_i,b_j) \in N^*$ then $(a_i,b_j)$ belongs to $T^*$ as well.
          \item[(ii)] For every $(a,b) \in T\setminus N$, we have to decide the indices $(s,t)$ such that $(a_s,b_t) \in T^*$.
       In the evaluation of $\Delta_a(N,T)$, while comparing the sets $N(a) \setminus T(a)$ and $T(a) \setminus N(a)$ (refer to Equation~(\ref{eq:delta}) in Section~\ref{intro}):
       \begin{itemize}
       \item[--]  let $b'$ be the course that $a$ compares $b$ with. 
So the matching $N^*$ contains the edge $(a_i,b'_j)$ for some $(i,j)$ and we now need to decide the index $k$ such that 
$T^*$ will contain $(a_i,b_k)$. In the evaluation of $\Delta_b(N,T)$, while comparing the sets $N(b) \setminus T(b)$ and $T(b) \setminus N(b)$:
       \begin{itemize}
           \item let $a'$ be the student that $b$ compares $a$ with. So the matching $N^*$ contains the edge $(a'_{i'},b_{j'})$ for some 
$(i',j')$. We include the edge $(a_i,b_{j'})$ in $T^*$.
           \item if $a$ is compared with ``null'' by $b$ (so $b$ is not fully matched in $N$), then we include $(a_i,b_k)$ in $T^*$ for some $k$ such that $(b_k,\ell(b_k)) \in N^*$ and $b_k$ is unmatched so far in $T^*$.
       \end{itemize}
       \item[--] suppose $b$ is compared with ``null'' by $a$ (so $a$ is not fully matched in $N$). 
       \begin{itemize}
         \item let $a'$ be the student that $b$ compares $a$ with in the evaluation of $\Delta_b(N,T)$ and so $(a'_{i'},b_{j'}) \in N^*$ for some $(i',j')$. We include the edge $(a_k,b_{j'})$ in $T^*$ for some $a_k$ such that $(a_k,\ell(a_k)) \in N^*$ and $a_k$ is unmatched so far in $T^*$. 
         \item in case $a$ is compared with ``null'' by $b$,
then we include the edge $(a_{k'},b_k)$ in $T^*$ for some $k'$ and $k$ such that  $(a_{k'},\ell(a_{k'}))$ and $(b_k,\ell(b_k))$ are in $N^*$ 
and $a_{k'}$ and $b_k$ are unmatched so far in $T^*$.
       \end{itemize}
      \end{itemize}
     \item[(iii)] For any vertex $u_k \in A' \cup B'$ that is left unmatched in steps~(i)-(ii), include $(u_k,\ell(u_k))$ in $T^*$.     
  \end{itemize}
  
  It is easy to see that $T^*$ is a valid matching in $G'_N$ and it matches all vertices in $A' \cup B'$.
  We have $\wt_N(T^*) = \sum_{e \in T^*}\wt_N(e)$.
  \begin{eqnarray}
    \sum_{e \in T^*}\wt_N(e) & = &  \sum_{(a_i,b_j) \in T^*} \big(\vote_a(T^*(a_i),N^*(a_i)) + \vote_b(T^*(b_j),N^*(b_j))\big) \nonumber\\
               &  & \ \ \ + \sum_{(u_k,\ell(u_k)) \in T^*} \vote_u(\ell(u_k),N^*(u_k)) \\
    & = & \ \ \sum_{u \in A\cup B}\sum_{i=1}^{\capac(u)}\vote_u(T^*(u_i),N^*(u_i)) \label{eq:popular1}\\
    & = & - \sum_{u \in A\cup B}\Delta_u(N,T) \label{eq:popular2}\\
    & = & -\Delta(N,T).\label{eq:popular3}
   \end{eqnarray}

  We have $\wt_N(a_i,b_j) = \vote_a(b,N^*(a_i)) + \vote_b(a,N^*(b_j))$ from the definition of edge weights  
  in $G'_N$. By grouping together for each vertex $u$, the edges $(u_i,v_j) \in T^*$ for all partners $v$ of $u$ in $T$ and any
  possible $(u_k,\ell(u_k))$ edges, we get the right side of Eqn.~\eqref{eq:popular1}.

  Crucially, Eqn.~\eqref{eq:popular2} follows from how we constructed the matching $T^*$: for each 
  vertex $u$, we have  $\sum_i\vote_u(N^*(u_i),T^*(u_i))  = \Delta_u(N,T)$ and so
  $\sum_i\vote_u(T^*(u_i),N^*(u_i)) = -\Delta_u(N,T)$.
  The total sum of all the terms $\Delta_u(N,T)$ for $u \in A' \cup B'$ is $\Delta(N,T)$.
  Thus it follows that $\wt_N(T^*) = -\Delta(N,T)$ and hence $N$ is a popular matching. \qed
\end{proof}

We now apply the above theorem to show that every pairwise-stable matching in $G$ is also a popular matching.
\begin{corollary}
  Every pairwise-stable matching in $G$ is popular.
\end{corollary}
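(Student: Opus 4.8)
The plan is to invoke Theorem~\ref{main-thm}: it suffices to show that if $N$ is a pairwise-stable matching in $G$, then every $(A'\cup B')$-complete matching $T^*$ in the graph $G'_N$ has weight $\wt_N(T^*)\le 0$. Recall the edge weights in $G'_N$: every edge of the form $(v,\ell(v))$ has weight in $\{-1,0\}$, and every edge $(a_i,b_j)\in A'\times B'$ has weight in $\{-2,0,2\}$, with weight $2$ \emph{if and only if} $(a_i,b_j)$ blocks $N$ (in the sense that $a$ prefers $b$ to $N^*(a_i)$ \emph{and} $b$ prefers $a$ to $N^*(b_j)$). So the entire argument comes down to showing that pairwise-stability of $N$ forces $G'_N$ to contain no edge of weight $2$; once that is established, every edge of $G'_N$ has weight at most $0$, hence every matching in $G'_N$ has weight at most $0$, and Theorem~\ref{main-thm} finishes the job.

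So the key step is: if $(a,b)$ is an edge with $(a_i,b_j)$ present in $G'_N$ and $\wt_N(a_i,b_j)=2$, then $(a,b)$ blocks $N$ in $G$, contradicting pairwise-stability. First I would note that $\wt_N(a_i,b_j)=2$ means $\vote_a(b,N^*(a_i))=1$ and $\vote_b(a,N^*(b_j))=1$. I then consider two cases for what $N^*(a_i)$ is. If $N^*(a_i)=\ell(a_i)$, then $a_i$ is matched to its last-resort neighbor in $N^*$, which means $a$ has fewer than $\capac(a)$ partners in $N$; in particular $a$ is not fully matched, which is exactly the first disjunct of the blocking-edge condition for $a$. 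Otherwise $N^*(a_i)$ is a genuine course $b'$ with $(a,b')\in N$, and $\vote_a(b,b')=1$ says $a$ prefers $b$ to $b'$; since in the construction of $G'_N$ the indices for edges of $N$ are distinct across copies of $a$, and $a$ is adjacent to $b$ in $G'_N$ via the copy $a_i$ only if $(a,b)\notin N$, the course $b'=N^*(a_i)$ is in fact a partner of $a$ in $N$, and $a$ preferring $b$ to \emph{some} partner in $N$ means $a$ prefers $b$ to its worst partner in $N$ — again the blocking condition for $a$ holds. The symmetric argument applied to $b$ and $N^*(b_j)$ gives the blocking condition for $b$. Hence $(a,b)$ blocks $N$, contradiction.

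The only subtlety worth spelling out carefully — and the step I expect to need the most care — is the claim that when $N^*(a_i)=b'\in B$, the course $b'$ really is a partner of $a$ in $N$ and $a$ preferring $b$ to $b'$ implies $a$ prefers $b$ to its \emph{worst} partner in $N$. This follows because $N^*$ restricted to $A'\times B'$ is exactly the realization $N'$ of $N$, so $N^*(a_i)\in B$ forces $(a,N^*(a_i))\in N$; and preferring $b$ to any one partner certainly implies preferring $b$ to the least-preferred partner. One should also observe that the two ``unmatched'' cases cannot be vacuous obstructions: if $N^*(a_i)=\ell(a_i)$ then $\vote_a(b,\ell(a_i))=1$ automatically since $\ell(a_i)$ is ranked at the bottom, consistent with $\wt=2$, and this is precisely the ``$a$ has less than $\capac(a)$ partners'' branch of the definition of a blocking pair. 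Assembling these observations, $N$ pairwise-stable $\Rightarrow$ no weight-$2$ edge in $G'_N$ $\Rightarrow$ all $(A'\cup B')$-complete matchings in $G'_N$ have weight $\le 0$ $\Rightarrow$ $N$ popular, by Theorem~\ref{main-thm}.
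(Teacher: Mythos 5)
Your proof is correct and follows essentially the same route as the paper: the paper's proof of this corollary is a two-line appeal to Theorem~\ref{main-thm} together with the observation (stated in the definition of the edge weights $\wt_N$) that $\wt_N(e)=2$ if and only if $e$ blocks $N$, so that pairwise-stability forces every edge of $G'_N$ to have weight at most $0$. The only difference is that you explicitly verify the ``weight~$2$ iff blocking'' equivalence, including the case analysis on whether $N^*(a_i)$ is a genuine partner or a last-resort vertex, which the paper asserts without proof; that verification is accurate.
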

\begin{proof}
  Let $S$ be any pairwise-stable matching in $G$. Consider the graph $G'_S$: since $S$ has no blocking edge in $G$, every edge $e$ in $G'_S$ satisfies $\wt_S(e) \le 0$. Thus every matching in $G'_S$ has weight at most 0 and so by
  Theorem~\ref{main-thm}, we can conclude that $S$ is popular. \qed
\end{proof}

Our goal is to show that $M_0$ is a max-size popular matching in $G$. We will do this as follows in Sections~\ref{sec:M-prop}
and \ref{sec:more-analysis}:
\begin{itemize}
\item We will show in Theorem~\ref{thm:max-pop} that $M_0$ satisfies the sufficient condition for popularity given in
  Theorem~\ref{main-thm}.
\item Lemma~\ref{lem:new-max-size} will show that no matching larger than $M_0$ can be a popular matching in $G$.
\end{itemize}
  
\subsection{The popularity of $M_0$}
\label{sec:M-prop}
We will now use Theorem~\ref{main-thm} to prove the popularity of the matching $M_0$ computed in 
Section~\ref{sec:hosp-res-algo}. We will construct the matchings $M'_0, M^*_0$ and the graph 
$G'_{M_0}$ corresponding to the matching $M_0$ as described at the beginning of
Section~\ref{sec:analysis}.
Our goal is to show that every $(A' \cup B')$-complete matching in $G'_{M_0}$ has weight at most 0.
Note that the matching $M^*_0$ has weight 0 in $G'_{M_0}$.

We partition the set $A'$ into $A'_0 \cup A'_1$ and the set $B'$ into $B'_0 \cup B'_1$ as follows.
Initialize $A'_0 = A'_1 = B'_0 = B'_1 = \emptyset$.
For each edge $(a_i,b_j) \in M'_0$ do:
\begin{itemize}
\item if $(a^0,b) \in M$ then add $a_i$ to $A'_0$ and $b_j$ to $B'_0$;
\item else (i.e., $(a^1,b) \in M$) add $a_i$ to $A'_1$ and $b_j$ to $B'_1$.
\end{itemize}
Recall that $M \subseteq A'' \times B$ is the matching in the graph $H$ obtained at the end of the 
2-level Gale-Shapley algorithm (see Algorithm~\ref{alg:many-many}) and the projection of $M$ on to $A \times B$ is $M_0$.

\begin{figure}[h]
	
  	\begin{center}
	
  	\begin{tikzpicture}
		
  		\def\width{1}
		\def\height{2.6*\width/1.62}
		
		\draw[dashed, thick] (-3*\width, 0) to (3*\width, 0);
		\draw[dashed, thick] (0,-1*\height) to (0, 1*\height);
		
		\node (r1) at (-2*\width, .5*\height) {$A'_1$};
		\node (r0) at (-2*\width, -.5*\height) {$A'_0$};
		\node (h1) at (2*\width, .5*\height) {$B'_1$};
		\node (h0) at (2*\width, -.5*\height) {$B'_0$};
		
		\draw (-2*\width,0) circle [thick, x radius=\width, y radius=\height];
		\draw (2*\width,0) circle [thick, x radius=\width, y radius=\height];
		
  	\end{tikzpicture}
	
  	\end{center}
	
	\caption{$A' = A'_0 \cup A'_1$ and $B' = B'_0 \cup B'_1$: all courses $b_j$ left unmatched in $M'_0$ are in $B'_0$ and all students $a_i$ left unmatched in $M'_0$ are in $A'_1$. Note that $M'_0 \subseteq (A'_0 \times B'_0) \cup (A'_1 \times B'_1)$.}
	\label{fig0:label}
	
  \end{figure}
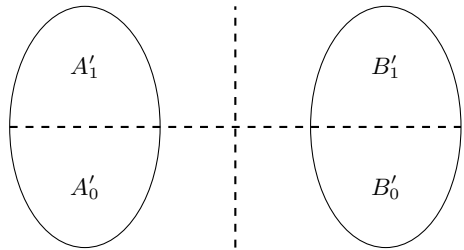

The definition of the sets $A'_0, A'_1, B'_0, B'_1$ implies that $M'_0 \subseteq (A'_0 \times B'_0) \cup (A'_1 \times B'_1)$.
Also add students unmatched in $M_0'$ to $A'_1$ and courses unmatched in $M_0'$ to $B'_0$. Thus 
we have $A' = A'_0 \cup A'_1$ and $B' = B'_0 \cup B'_1$ (see Fig.~\ref{fig0:label}).

Theorem~\ref{thm:max-pop} will show that the matching $M_0$ satisfies the condition of 
Theorem~\ref{main-thm}, this will prove that $M_0$ is a popular matching in $G$. 
This proof is inspired by the proof in \cite{Kav16} that shows the membership of 
certain half-integral matchings in the popular fractional matching polytope of a 
stable marriage instance.

In order to show that every $(A' \cup B')$-complete matching 
in $G'_{M_0}$ has weight at most 0, we consider the max-weight $(A' \cup B')$-complete matching 
problem in $G'_{M_0}$ as our primal LP. We show a dual feasible solution $\vec{\alpha}$ that makes
the dual objective function 0. This means the primal optimal value is at most 0 and this is what
we set out to prove.

\begin{theorem}
\label{thm:max-pop}
Every $(A' \cup B')$-complete matching in $G'_{M_0}$ has weight at most 0.
\end{theorem}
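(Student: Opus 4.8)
The plan is to prove Theorem~\ref{thm:max-pop} via LP duality, exactly as the text announces: take the max-weight $(A'\cup B')$-complete matching problem in $G'_{M_0}$ as the primal, exhibit an explicit feasible dual solution $\vec\alpha=(\alpha_v)_{v\in A'\cup B'\cup\{\ell(v)\}}$, and show that the dual objective $\sum_v\alpha_v$ equals $0$; by weak LP duality this forces the primal optimum to be at most $0$. Concretely, the max-weight perfect/complete matching LP on the bipartite-like structure of $G'_{M_0}$ (students and their $\ell$-vertices on one side, courses and their $\ell$-vertices on the other) has dual constraints $\alpha_u+\alpha_w\ge \wt_{M_0}(u,w)$ for every edge $(u,w)$, together with the equality constraints coming from the fact that every $\ell(v)$-vertex must be matched (so its dual variable is free and its single incident edge gives $\alpha_v+\alpha_{\ell(v)}=\wt_{M_0}(v,\ell(v))$ in an optimal dual, or is simply absorbed). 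The natural guess, following~\cite{Kav16}, is to set $\alpha_{a_i}\in\{+1,-1\}$ according to whether $a_i\in A'_0$ or $A'_1$, and $\alpha_{b_j}\in\{+1,-1\}$ according to whether $b_j\in B'_1$ or $B'_0$ — i.e. level-$1$ students and level-$0$ courses get $-1$, level-$0$ students and level-$1$ courses get $+1$ — and then set each $\alpha_{\ell(v)}=-\alpha_v$ (capped appropriately so the $\ell$-edge constraints hold).

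The first step is to verify $\sum_{v}\alpha_v=0$. Since $M'_0\subseteq (A'_0\times B'_0)\cup(A'_1\times B'_1)$ pairs each $A'_0$-student with a $B'_0$-course and each $A'_1$-student with a $B'_1$-course, and the $+1$'s and $-1$'s are assigned so that matched pairs contribute $\alpha_{a_i}+\alpha_{b_j}=(+1)+(-1)=0$ or $(-1)+(+1)=0$, the matched part contributes nothing; one then checks that unmatched students (all in $A'_1$, value $-1$) are cancelled by their $\ell$-neighbors (value $+1$) and unmatched courses (all in $B'_0$, value $-1$) likewise — and that the $\ell$-vertices of matched copies also cancel in pairs, or contribute $0$ because $\wt_{M_0}(v,\ell(v))=-1$ exactly when $v$ is matched in $M'_0$. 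This is the routine bookkeeping part, a direct consequence of Fig.~\ref{fig0:label}.

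The substantive step — and the main obstacle — is dual feasibility: $\alpha_u+\alpha_w\ge \wt_{M_0}(u,w)$ for every edge $(a_i,b_j)$ of $G'_{M_0}$ with $(a,b)\notin M_0$ (these are the edges carrying possibly-positive weight, with $\wt_{M_0}(e)=2$ iff $e$ blocks $M_0$). The only way this can fail is if $\alpha_{a_i}+\alpha_{b_j}=0$ (or $-2$) while $\wt_{M_0}(a_i,b_j)=2$, i.e. $a$ is a level-$1$ student in $G'_{M_0}$ whose copy $b_j$ is a level-$0$ course, and $(a,b)$ blocks $M_0$. I would rule this out using the structure of the 2-level Gale--Shapley run: if $a_i\in A'_1$ then $a$ reached level~$1$, which means $a^1$ proposed down its entire list, so in particular $a^1$ proposed to $b$; and if $b_j\in B'_0$ then $b$ ended up matched (in $M$) to level-$0$ partners at the copy $j$, so $b$ never accepted a level-$1$ proposal at that slot — meaning either $b$'s worst partner is a level-$1$ student (so $b$ is "full of level-$1$ or better" and $(a,b)$ cannot block because $b$ prefers its level-$1$ partners, being level-$1$, to the level-$1$ proposer $a^1$... wait, need care) or $b$ rejected $a^1$ because $b$ was already full with partners it prefers. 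The precise claim to nail down: whenever $a_i\in A'_1$ and $b_j\in B'_0$ with $(a,b)\in E\setminus M_0$, the edge $(a,b)$ does not block $M_0$ — because $a^1$ proposed to $b$ and was rejected (or had its edge deleted in line~17), which by the Gale--Shapley invariant means $b$ is matched to $\capac(b)$ partners all of whom it prefers to $a$ at level~$1$, hence also prefers to $a$'s actual contribution. For the remaining cases — $\alpha_{a_i}+\alpha_{b_j}=0$ with the roles reversed ($A'_0$ student, $B'_1$ course), or $=2$ (then the constraint is trivially satisfied since $\wt\le 2$), or $=-2$ (then we need $\wt_{M_0}(a_i,b_j)=-2$, i.e. the edge is blocked by neither side, which again follows: an $A'_1$ student $a$ still at level~$1$ is matched to its least-preferred acceptable partners so $\vote_a(b,M^*_0(a_i))$ should be checked...) — I would argue each is forced by an analogous Gale--Shapley invariant. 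Finally, feasibility of the $\ell$-edges, $\alpha_v+\alpha_{\ell(v)}\ge\wt_{M_0}(v,\ell(v))=\vote_v(\ell(v),M^*_0(v))\in\{0,-1\}$, holds trivially once $\alpha_{\ell(v)}$ is chosen as $-\alpha_v$ or as $\min(0,-\alpha_v)$. Having established feasibility and the objective value $0$, weak duality gives that every $(A'\cup B')$-complete matching in $G'_{M_0}$ has weight at most $0$, completing the proof; in particular, combined with Theorem~\ref{main-thm}, this shows $M_0$ is popular.
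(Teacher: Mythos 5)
Your overall strategy is the same as the paper's: set up the max-weight $(A'\cup B')$-complete matching LP, exhibit a dual solution with objective $0$ built from the partition $A'_0,A'_1,B'_0,B'_1$, and invoke weak duality. However, there is a genuine gap at exactly the point you flag as "the claim to nail down." For an edge $(a_i,b_j)\in A'_1\times B'_0$ whose dual values sum to $-2$, you need $\wt_{M_0}(a_i,b_j)=-2$, i.e.\ \emph{both} that $b$ prefers $M'_0(b_j)$ to $a$ \emph{and} that $a$ prefers $M^*_0(a_i)$ to $b$; ruling out a blocking edge only gives $\wt\le 0$, which does not satisfy the constraint. Your sketch of this step rests on a false premise: you assert that $a_i\in A'_1$ means "$a^1$ proposed down its entire list, so in particular $a^1$ proposed to $b$." In Algorithm~1 it is $a^0$ that exhausts its list before $a^1$ is activated; $a^1$ stops proposing as soon as $\res(a)=0$. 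The correct argument is the \emph{opposite} of yours: one shows $a^1$ never reached $b$ in its proposal sequence, because if it had, $b$ (which has a level-0 partner, since $b_j\in B'_0$) would have accepted and retained $a^1$, or else $b$ would have ended up with all level-1 partners, contradicting $b_j\in B'_0$. Hence all of $a$'s level-1 acquisitions are preferred to $b$, which is what yields $\vote_a(b,M^*_0(a_i))=-1$. Your version, in which $a^1$ proposes to $b$ and is rejected, controls only $b$'s side of the vote and would even permit $a$ to be matched to neighbors \emph{worse} than $b$, giving $\vote_a(b,M^*_0(a_i))=+1$ and breaking the constraint. The remaining cases (edges inside $(A'_0\times B'_0)\cup(A'_1\times B'_1)$ need $\wt\le 0$, which follows since within a level each vertex's order is its original one and the usual Gale--Shapley no-blocking invariant applies) are also left as "analogous invariants" rather than proved.

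A secondary issue is your dual assignment itself. You give $-1$ to \emph{all} of $A'_1\cup B'_0$, including vertices unmatched in $M'_0$, and then introduce variables $\alpha_{\ell(v)}$ "capped appropriately." In the clean formulation (complete-matching constraints only on $A'\cup B'$, so the $\ell$-edge constraint reads $\alpha_{u_k}\ge\wt_{M_0}(u_k,\ell(u_k))$), an unmatched $u_k$ has $\wt_{M_0}(u_k,\ell(u_k))=0$, so $\alpha_{u_k}=-1$ is infeasible. The fix used in the paper is simply to set $\alpha_{u_k}=0$ for unmatched vertices and $\pm1$ only for matched ones; this keeps the objective at $0$ without any $\ell$-vertex bookkeeping, and it is also the choice needed later for the complementary-slackness argument. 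As written, your proposal would need both this correction and a genuine proof of the weight claim on $A'_1\times B'_0$ edges before it constitutes a proof.
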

\begin{proof}
Let our primal LP be the max-weight $(A' \cup B')$-complete matching 
problem in $G'_{M_0}$. We want to show that the primal optimal value is at most 0.
The primal LP is the following: 
\begin{align*}
	\max \sum_{e \in G'_{M_0}} \wt_{M_0}(e)\cdot x_e&\\
	\text{subject to}\qquad\sum_{e \in E'(u_k)}x_e &= 1 &&\text{for all } u_k \in A' \cup B'\text,\\
	x_e &\ge 0 &&\text{for all edges } e \in G'_{M_0}\text,
\end{align*}
where $E'(u_k)$ is the set of edges incident on $u_k$ in $G'_{M_0}$.

The dual LP is the following: we associate a variable $\alpha_{u_k}$ to each vertex $u_k \in A' \cup B'$.
\begin{align}
	\min \sum_{u_k \in A' \cup B'}\alpha_{u_k}& \notag\\
	  \text{subject to}\qquad \alpha_{a_i} + \alpha_{b_j} &\ge \wt_{M_0}(a_i,b_j)  &&\text{for all edges}\ (a_i,b_j)\in G'_{M_0}\text,\label{eq:dual1}\\
	\alpha_{u_k} &\ge \wt_{M_0}(u_k,\ell(u_k)) &&\text{for all } u_k \in A' \cup B'\text.\label{eq:dual2}
\end{align}

Consider the following assignment of $\alpha$-values for all $u_k \in A' \cup B'$: set 
$\alpha_{u_k} = 0$ for all $u_k$ unmatched in $M'_0$ (each such vertex is in $A'_1\cup B'_0$) and for the 
matched vertices $u_k$ in $M'_0$, we set $\alpha$-values as follows: $\alpha_{u_k} = 1$ if $u_k \in A'_0\cup B'_1$
and $\alpha_{u_k} = -1$ if $u_k \in A'_1\cup B'_0$.

Observe that Inequality~\eqref{eq:dual2} holds for all vertices $u_k \in A' \cup B'$. This is because 
$\alpha_{u_k} = 0 = \wt_{M_0}(u_k,\ell(u_k))$ for all $u_k$ unmatched in $M'_0$; similarly, for all $u_k$ matched 
in $M'_0$ we have $\alpha_{u_k} \ge -1 = \wt_{M_0}(u_k,\ell(u_k))$.
In order to show Inequality~\eqref{eq:dual1}, we will use Claim~\ref{lem:G-M-edges} stated below.

\begin{new-claim}
\label{lem:G-M-edges}  
  Let $e = (a_i,b_j)$ be any edge in $G'_{M_0}$.
  \begin{itemize}
    \item[(i)] If $e \in A'_1 \times B'_0$, then $\wt_{M_0}(e) = -2$.
    \item[(ii)] If $e \in (A'_0 \times B'_0) \cup (A'_1 \times B'_1)$, then $\wt_{M_0}(e) \le 0$.
  \end{itemize}    
\end{new-claim}

\begin{itemize}
\item Claim~\ref{lem:G-M-edges}~(i) says that for every edge $(a_i,b_j) \in  A'_1 \times B'_0$ in $G'_{M_0}$, we have 
$\wt_{M_0}(a_i,b_j) = -2$.
Since  $\alpha_{u_k} \ge -1$ for all $u_k \in A'_1 \cup B'_0$, Inequality~\eqref{eq:dual1} holds for all 
edges of $G'_{M_0}$ in $A'_1 \times B'_0$. 
\item Claim~\ref{lem:G-M-edges}~(ii) says that for every edge $(a_i,b_j)$ in 
$(A'_0 \times B'_0) \cup (A'_1 \times B'_1)$, we have $\wt_{M_0}(a_i,b_j) \le 0$. Since 
$\alpha_{a_i} + \alpha_{b_j} \ge 0$ for all $(a_i,b_j) \in A'_t\times B'_t$ (for $t=0,1$),
Inequality~\eqref{eq:dual1} holds for all edges of $G'_{M_0}$ in  
$(A'_0 \times B'_0) \cup (A'_1 \times B'_1)$.
\end{itemize}

Since $\wt_{M_0}(e) \le 2$ for all edges $e$ in $G'_{M_0}$ and we set $\alpha_{u_k} = 1$ for all 
vertices $u_k \in A'_0\cup B'_1$, Inequality~\eqref{eq:dual1} is satisfied for all edges of $G'_{M_0}$ in  
$A'_0 \times B'_1$. Thus Inequality~\eqref{eq:dual1} holds for all edges $(a_i,b_j)$ in $G'_{M_0}$ and so these 
$\alpha$-values are dual feasible. 

For every edge $(a_i,b_j) \in M'_0$, we have $\alpha_{a_i} + \alpha_{b_j} = 0$ and $\alpha_{u_k} = 0$ for 
vertices $u_k$ unmatched in $M'_0$. Hence it follows that $\sum_{u_k \in A' \cup B'}\alpha_{u_k} = 0$. So by weak duality, 
the optimal value of the primal LP is at most 0. In other words, every matching in $G'_{M_0}$ that matches all vertices in 
$A' \cup B'$ has weight at most 0. \qed
\end{proof}

\noindent{\bf Proof of Claim~\ref{lem:G-M-edges}.}
  Consider any edge $(a_i,b_j) \in A'_1 \times B'_0$ in $G'_{M_0}$. Note that the matching $M_0$ does 
not contain the edge $(a,b)$ -- if it did, then $G'_{M_0}$ would have only one copy of this edge, 
say $(a_s,b_t)$, which being an edge of $M'_0$, has to be in either $A'_0 \times B'_0$ or 
$A'_1 \times B'_1$ whereas we are given that $(a_i,b_j) \in A'_1 \times B'_0$. 
The student $a_i \in A'_1$, i.e., $a^1$ got activated in our algorithm and recall 
that every course prefers level~1 neighbors to level~0 neighbors in our algorithm. So if $a^1$ 
had proposed to $b$, then this offer would have been accepted since $b$ had at least one partner 
who was a level~0 student (since $b_j \in B'_0$). Thus $a^1$ (with its entire residual capacity) 
must have been accepted by neighbors that $a$ prefers to $b$. Hence $a_i$ prefers its partner in 
$M^*_0$ to $b$, so $\vote_a(b,M^*_0(a_i)) = -1$.
  
  Since $(a,b) \notin M_0$ while $a^1$ got activated in our algorithm, along with the fact that $b_j \in B'_0$, 
it follows that the student $a^0$ was rejected by $b$. When $b$ rejected $a^0$, the course $b$ was matched 
to $\capac(b)$ neighbors, each of which was preferred by $b$ to $a^0$. Thereafter, $b$ may have received 
(and accepted) better offers from its neighbors and since $b_j \in B'_0$, the course $b$ never received enough 
offers from level~1 neighbors to have all its partners as level~1 students. In particular, $b_j$ is matched 
to a level~0 neighbor that is preferred to $a^0$. Thus $b_j$ prefers its neighbor in $M^*_0$ to $a$, so 
$\vote_b(a,M^*_0(b_j)) = -1$. So it follows that  $\wt_{M_0}(a_i,b_j) = -2$.

  We will now show part~(ii) of this lemma. In our algorithm, the preference order of each vertex,
when restricted to level~0 neighbors, is its original preference order and similarly, its 
preference order when restricted to level~1 neighbors, is its original preference order. Thus for 
each edge $(a_i,b_j)$ in $G'_{M_0}$ where $(a_i,b_j) \in (A'_0 \times B'_0) \cup (A'_1 \times B'_1)$, 
either (1)~the vertex $b$ prefers $M'_0(b_j)$ to $a$ or the vertex $a$ prefers $M'_0(a_i)$ to $b$  or 
(2)~$(a_i,b_j) \in M'_0$. In both cases, we have $\wt_{M_0}(a_i,b_j) \le 0$. \qed

\subsection{Maximality of the popular matching $M_0$}
\label{sec:more-analysis}

We need to show that $M_0$ is a max-size popular matching in $G$ and we now show that this follows quite easily
from the proof of Theorem~\ref{thm:max-pop}. Let $T$ be any matching in $G$. 
We can obtain a realization $T^*$ of the matching $T$ in $G'_{M_0}$ that is absolutely analogous to 
how it was done in the proof of Theorem~\ref{main-thm}. Thus $T^*$ is an
$(A' \cup B')$-complete matching in $G'_{M_0}$ and $\wt_{M_0}(T^*) = -\Delta(M_0,T)$.

We know from  Theorem~\ref{thm:max-pop} that $\wt_{M_0}(T^*) \le 0$. Suppose $T$ is a popular matching in $G$.
Then $\wt_{M_0}(T^*)$ has to be $0$, otherwise the popularity of $T$ is contradicted since $\wt_{M_0}(T^*) < 0$ implies that 
$\Delta(M_0,T) > 0$ (because $\wt_{M_0}(T^*) = -\Delta(M_0,T)$).

So if $T$ is a popular matching in $G$, then $T^*$ is an optimal solution to the maximum weight 
$(A'\cup B'$)-complete matching problem in $G'_{M_0}$. Recall that this is the primal LP in the proof 
of Theorem~\ref{thm:max-pop}. We will use the dual feasible solution $\vec{\alpha}$ that we constructed in
the proof of Theorem~\ref{thm:max-pop} and apply complementary slackness to show that if $(u_k,\ell(u_k)) \in M_0^*$,
i.e., if $u_k$ is left unmatched in $M_0'$, then  $T^*$ also has to contain the edge $(u_k,\ell(u_k))$. This
will imply that $|T| \le |M_0|$, i.e., every popular matching in $G$ has size at most $|M_0|$.

\begin{lemma}
  \label{lem:new-max-size}
  Let $T$ be a popular matching in $G$ and let $T^*$ be the realization of $T$ in $G'_{M_0}$.
  Then for any vertex $u_k \in A' \cup B'$ we have: $(u_k,\ell(u_k)) \in M_0^*$ implies $(u_k,\ell(u_k)) \in T^*$.
\end{lemma}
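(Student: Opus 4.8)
The plan is to read the lemma directly off linear-programming duality applied to the primal/dual pair from the proof of Theorem~\ref{thm:max-pop}. First I would confirm that $T^{*}$ is an \emph{optimal} solution of that primal LP and $\vec\alpha$ an optimal dual solution, so that complementary slackness applies. Since $\wt_{M_0}(T^{*}) = -\Delta(M_0,T)$ and Theorem~\ref{thm:max-pop} gives $\wt_{M_0}(T^{*})\le 0$, if the inequality were strict then $\Delta(M_0,T)>0$; combined with $\Delta(M_0,T)+\Delta(T,M_0)\le 0$ this forces $\Delta(T,M_0)<0$, contradicting the popularity of $T$. Hence $\wt_{M_0}(T^{*})=0$, which is exactly the primal optimum (it is $\le 0$ by Theorem~\ref{thm:max-pop} and $\ge 0$ because $M_0^{*}$ is primal feasible with weight $0$). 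The vector $\vec\alpha$ constructed in the proof of Theorem~\ref{thm:max-pop} is dual feasible with objective value $0$, hence dual optimal. Therefore by complementary slackness, for every edge $e\in T^{*}$ the corresponding dual constraint, namely (\ref{eq:dual1}) if $e$ joins two real copies and (\ref{eq:dual2}) if $e$ is a last-resort edge, holds with equality.

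Next I would run a short case analysis on a fixed $u_k$ with $(u_k,\ell(u_k))\in M_0^{*}$. This is equivalent to saying $u_k$ is unmatched in $M_0'$, so $u_k\in A'_1\cup B'_0$ (see Fig.~\ref{fig0:label}) and $\alpha_{u_k}=0$. As $T^{*}$ is $(A'\cup B')$-complete, $u_k$ is matched in $T^{*}$ either to $\ell(u_k)$, in which case we are done, or to some real copy $v_l$. In the latter case complementary slackness gives $\alpha_{u_k}+\alpha_{v_l}=\wt_{M_0}(u_k,v_l)$, i.e.\ $\alpha_{v_l}=\wt_{M_0}(u_k,v_l)$. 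Now I would split on where $v_l$ lies. If the edge $(u_k,v_l)$ lies in $A'_1\times B'_0$ then Claim~\ref{lem:G-M-edges}(i) makes $\wt_{M_0}(u_k,v_l)=-2$, while $\alpha_{v_l}\ge -1$ by the construction of $\vec\alpha$ — impossible. Otherwise the edge lies in $A'_0\times B'_0$ (so $v_l\in A'_0$) or in $A'_1\times B'_1$ (so $v_l\in B'_1$); in either case $v_l$ is matched in $M_0'$, so $\alpha_{v_l}=1$, which would force $\wt_{M_0}(u_k,v_l)=1$, contradicting $\wt_{M_0}(e)\in\{\pm2,0\}$ for edges between real copies. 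Hence $u_k$ must be matched to $\ell(u_k)$ in $T^{*}$, which is the lemma; counting last-resort edges then yields $|T|\le|M_0|$ as promised in the surrounding text.

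The only genuinely delicate step is the first one: in the many-to-many setting the popularity of $T$ by itself does not give $\Delta(M_0,T)=0$, and one must combine it with the asymmetry inequality $\Delta(M_0,T)+\Delta(T,M_0)\le 0$ to pin $\wt_{M_0}(T^{*})$ to exactly $0$, which is precisely what activates complementary slackness against the \emph{specific} dual optimum $\vec\alpha$. After that the argument is bookkeeping over the four blocks of Fig.~\ref{fig0:label}, using only Claim~\ref{lem:G-M-edges}, the explicit signs of $\vec\alpha$, and the fact that which vertices of $A'_0,A'_1,B'_0,B'_1$ are matched in $M_0'$ is known; the clean numerical reason the cross cases cannot occur is that $\alpha_{a_i}+\alpha_{b_j}$ is never an odd number for an edge of $G'_{M_0}$ whereas $\wt_{M_0}$ is always even there.
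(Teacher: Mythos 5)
Your proof is correct and follows the paper's argument essentially verbatim: you establish that $T^*$ and $\vec\alpha$ are optimal primal/dual solutions (using popularity of $T$ together with $\Delta(M_0,T)+\Delta(T,M_0)\le 0$ to pin the primal value to $0$, a step the paper leaves implicit) and then invoke complementary slackness on the edges of $G'_{M_0}$ incident to $u_k$. The only cosmetic difference is the final bookkeeping: the paper shows every real neighbor of $u_k$ lies in $B'_1$ (resp.\ $A'_0$) and computes $\wt_{M_0}(u_k,v_t)=0$ against $\alpha_{u_k}+\alpha_{v_t}=1$ to exhibit slackness, whereas you rule out a tight edge via Claim~\ref{lem:G-M-edges}(i) and the parity mismatch between the even weights and the odd $\alpha$-sums --- both are valid.
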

\begin{proof}
Consider the $\alpha$-values assigned to vertices in $A' \cup B'$ in the proof of 
Theorem~\ref{thm:max-pop}. This is an {\em optimal dual} solution since its value is 0 
which is the value of the optimal primal solution. Thus complementary slackness conditions
have to hold for each edge in the
optimal solution $(T^*_e)_{e \in G'_{M_0}}$ to the primal LP. That is, for each edge $(u_k,v_t) \in G'_{M_0}$, 
we have:
\begin{equation}
  \label{eqn:new}
  \mathrm{either}\ \ \ \  \alpha_{u_k} + \alpha_{v_t} = \wt_{M_0}(u_k,v_t) \ \ \ \ \mathrm{or}\ \ \ \ T^*_{(u_k,v_t)} = 0.
\end{equation}  

Let $u_k \in A' \cup B'$ be a vertex such that $(u_k,\ell(u_k)) \in M_0^*$, so $\alpha_{u_k} = 0$. If 
$u \in A$, then $u_k \in A'_1$. Observe that all of $u_k$'s neighbors in $G'_{M_0}$ are in $B'_1$  -- this is
because for any neighbor $v_t \ne \ell(u_k)$ of $u_k$, we have $\vote_u(v,\ell(u_k)) = 1$ and so $\wt_{M_0}(u_k,v_t) \ge 0$.
Claim~\ref{lem:G-M-edges}~(i) says that $\wt_{M_0}(u_k,v_t) = -2$ for all edges
$(u_k,v_t) \in A'_1 \times B'_0$. Thus $u_k$ has no neighbor in $B'_0$.
Similarly, if $u \in B$, then $u_k \in B'_0$ and all its neighbors in $G'_{M_0}$ are in $A'_0$; otherwise 
$u_k$ has a neighbor $v_t$ in $A'_1$ and Claim~\ref{lem:G-M-edges}~(i) would get contradicted since $\wt_{M_0}(u_k,v_t) \ge 0$.

In both cases, every edge $(u_k,v_t) \in A' \times B'$ that is incident on $u_k$ in $G'_{M_0}$ is 
{\em slack} because $(u_k,v_t) \in (A'_0\times B'_0) \cup (A'_1 \times B'_1)$: thus 
$\alpha_{u_k} = 0$ and $\alpha_{v_t} = 1$ while $\wt_{M_0}(u_k,v_t)=\vote_u(v,\ell(u_k))+\vote_v(u,M'_0(v_t))=1-1=0$. 
Thus it follows from Equation~(\ref{eqn:new}) that  $T^*_{(u_k,v_t)} = 0$ for
$v_t \ne \ell(u_k)$. Since $T^*$ is $(A' \cup B')$-complete, we have $(u_k,\ell(u_k)) \in T^*$. \qed
\end{proof}

Now it is immediate to see that $M_0$ is a max-size popular matching in $G$. Let $T$ be any popular 
matching in $G$. Consider the matching $T' = T^* \setminus \{(u_k,\ell(u_k)): u_k \in A' \cup B'\}$. 
Lemma~\ref{lem:new-max-size} implies that $|T'| \le |M'_0|$ because every vertex $u_k$ 
left unmatched in $M'_0$ has to be left unmatched in $T'$ also. Since $|T| = |T'|$ and $|M'_0| = |M_0|$, we have 
$|T| \le |M_0|$.
As this holds for any popular matching $T$ in $G$, we can conclude that $M_0$ is a max-size popular 
matching in $G$.

It is easy  to see that our algorithm runs in linear time (see Appendix~B for the details).
Hence we can conclude the following theorem.

\begin{theorem}\label{thm:max-size}
  A max-size popular matching in a many-to-many instance $G = (A \cup B,E)$ can be computed in linear time.
\end{theorem}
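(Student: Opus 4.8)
The plan is to assemble the results of Sections~\ref{sec:analysis}--\ref{sec:more-analysis} and then add a short running-time analysis. First I would observe that $M_0$ --- the projection to $E$ of the matching $M$ produced by Algorithm~\ref{alg:many-many} --- is popular: Theorem~\ref{thm:max-pop} shows that every $(A'\cup B')$-complete matching in $G'_{M_0}$ has weight at most $0$, so Theorem~\ref{main-thm} applies directly. For maximality, take any popular matching $T$ in $G$ and its realization $T^*$ in $G'_{M_0}$, and set $T' = T^*\setminus\{(u_k,\ell(u_k)) : u_k\in A'\cup B'\}$; Lemma~\ref{lem:new-max-size} says every vertex $u_k$ left unmatched in $M_0'$ is also left unmatched in $T'$, hence $|T'|\le|M_0'|$, and since $|T|=|T'|$ and $|M_0'|=|M_0|$ we conclude $|T|\le|M_0|$. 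So $M_0$ is a max-size popular matching, and it remains only to verify that Algorithm~\ref{alg:many-many} runs in $O(m+n)$ time.

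For the running time I would first bound the total number of proposals by $2m$: an active copy $a^i$ only ever proposes (line~5) to a neighbour in $H$ it has not proposed to before, so each of the two copies of $a\in A$ makes at most $\deg_G(a)$ proposals over its whole lifetime, and these are the only vertices that propose. Each rejection of a copy $v^j$ by a course (line~13) can be charged to the single proposal that pushed that course past its capacity, so the number of rejections --- and hence the number of (re-)insertions into $Q$, on top of the $O(n)$ initial and level~1 activations --- is also $O(m+n)$. Every edge of $H$ is deleted from $H$ at most once, so the line~17 work totals $O(m)$, and every other pass through the inner while-loop does $O(1)$ bookkeeping; thus the whole algorithm runs in $O(m+n)$ time, provided the course-side operations can each be supported in $O(1)$ amortized time.

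Supplying that last proviso --- which is what the details behind Appendix~B are for --- is the one point that needs care, though it is bookkeeping rather than a conceptual obstacle. The key fact I would use is that for each course $b$ the acceptance threshold (the preference rank of $b$'s current worst matched partner) is non-decreasing throughout the run. So I would keep each course's adjacency list in $H$ sorted by its own preference order (the input already gives ranked lists, so this is linear), maintain a single forward-moving pointer at $b$'s worst current partner, and store $M(b)$ keyed by this ranking; then ``find $b$'s worst partner'', the batch deletion in line~17 of $H$-edges ranked below the threshold, and the tests ``does $b$ prefer $a$ to its worst partner?'' each cost $O(1)$ amortized, the pointer advancing at most $\deg_G(b)$ times overall, and maintaining $\res(\cdot)$ is trivially $O(1)$ per update. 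Together with the correctness and maximality established above, this gives Theorem~\ref{thm:max-size}. The hard part, to the extent there is one, is this course-side amortization; the conceptual content --- popularity via the weighted graph $G'_{M_0}$ and LP duality, and maximality via complementary slackness --- is already in Theorems~\ref{main-thm} and~\ref{thm:max-pop} and Lemma~\ref{lem:new-max-size}.
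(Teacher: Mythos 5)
Your proposal is correct and follows essentially the same route as the paper: popularity of $M_0$ from Theorem~\ref{thm:max-pop} via Theorem~\ref{main-thm}, maximality from Lemma~\ref{lem:new-max-size} exactly as in Section~\ref{sec:more-analysis}, and a linear-time implementation that matches Appendix~B (ranked adjacency arrays, a monotone pointer tracking each course's worst current partner, proposals bounded by $2m$, and rejections charged to proposals). The only cosmetic difference is that the paper realizes the line-17 deletions lazily via the $\maxrank(\cdot)$ test rather than by explicit batch deletion, which changes nothing in the analysis.
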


\noindent{\bf A max-size weakly popular matching.}
We will now show that no matching larger than $M_0$ can be weakly popular
(see Definition~\ref{def:weakly-pop}) as $\Delta(M_0,T) > 0$ for any such matching $T$. This implies that $M_0$
is also a max-size weakly popular matching in $G$.

\begin{lemma}
\label{lem:max-size}
  Let $T$ be a matching such that $|T| > |M_0|$. Then $\Delta(M_0,T) > 0$, i.e., $M_0$ is more 
popular than $T$.
\end{lemma}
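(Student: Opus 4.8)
The plan is to reuse the machinery already built in the proof of Theorem~\ref{thm:max-pop} and Lemma~\ref{lem:new-max-size}, only now the hypothesis is the strict size inequality $|T| > |M_0|$ rather than the popularity of $T$. First I would take the realization $T^*$ of $T$ in $G'_{M_0}$ exactly as constructed in Theorem~\ref{main-thm}, so that $T^*$ is an $(A'\cup B')$-complete matching and $\wt_{M_0}(T^*) = -\Delta(M_0,T)$. By Theorem~\ref{thm:max-pop} we already know $\wt_{M_0}(T^*) \le 0$, hence $\Delta(M_0,T)\ge 0$; the whole task is to upgrade this to a \emph{strict} inequality, i.e.\ to show $\wt_{M_0}(T^*) < 0$, equivalently that $T^*$ is \emph{not} an optimal primal solution.

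The key observation is a counting argument on the "last resort" edges. Set $T' = T^* \setminus \{(u_k,\ell(u_k)): u_k \in A'\cup B'\}$ and likewise recall $M'_0$ has the same size as $M_0$ while $T'$ has the same size as $T$. Since $|T'| = |T| > |M_0| = |M'_0|$, the matching $T'$ leaves strictly fewer vertices of $A'\cup B'$ unmatched than $M'_0$ does; in particular there must exist some vertex $u_k \in A'\cup B'$ with $(u_k,\ell(u_k)) \in M_0^*$ (i.e.\ $u_k$ unmatched in $M'_0$) but $(u_k,\ell(u_k)) \notin T^*$ — otherwise every vertex unmatched in $M'_0$ would be unmatched in $T'$ too, forcing $|T'|\le|M'_0|$, a contradiction. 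Now I invoke the slackness analysis already carried out in the proof of Lemma~\ref{lem:new-max-size}: for such a $u_k$ (say $u\in A$, so $u_k\in A'_1$, the case $u\in B$ being symmetric), every $A'\times B'$ edge incident on $u_k$ in $G'_{M_0}$ lies in $A'_0\times B'_0 \cup A'_1\times B'_1$, has weight exactly $0$, and is \emph{slack}, since $\alpha_{u_k}=0$ and $\alpha_{v_t}=1$ gives $\alpha_{u_k}+\alpha_{v_t} = 1 > 0 = \wt_{M_0}(u_k,v_t)$. Because $T^*$ is $(A'\cup B')$-complete but does not use $(u_k,\ell(u_k))$, it must use one of these strictly slack edges, so $T^*$ does not satisfy complementary slackness against the optimal dual $\vec\alpha$ of value $0$. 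Therefore $\wt_{M_0}(T^*)$ is strictly less than the dual value $0$, i.e.\ $\wt_{M_0}(T^*) < 0$, whence $\Delta(M_0,T) = -\wt_{M_0}(T^*) > 0$.

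One subtlety I expect to be the main thing to get right is the direction of the realization: the proof of Theorem~\ref{main-thm} was stated for $\Delta(N,T)$ with $N$ the popular matching, and here we need $\Delta(M_0,T)$ with $T$ possibly \emph{unpopular and larger}; but the construction of $T^*$ there uses only that $T$ is a matching in $G$ (the adversarial orderings are chosen per vertex with respect to $\Delta(M_0,T)$), so the identity $\wt_{M_0}(T^*) = -\Delta(M_0,T)$ carries over verbatim, as the paper itself already notes at the start of Section~\ref{sec:more-analysis}. A second small point: one must be careful that the vertex $u_k$ produced by the counting argument is one where the slackness argument of Lemma~\ref{lem:new-max-size} applies, but that argument was made for \emph{exactly} the vertices with $(u_k,\ell(u_k))\in M_0^*$, which is precisely our $u_k$, so nothing new is needed. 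Assembling these pieces gives $\Delta(M_0,T)>0$, and since any weakly popular matching $T$ would require $\Delta(M_0,T)\le 0$, this simultaneously shows $M_0$ is a max-size weakly popular matching.
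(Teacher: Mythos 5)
Your proposal is correct and follows essentially the same route as the paper's own proof: identify (by counting) a vertex $u_k$ that is unmatched in $M'_0$ but matched to a non-last-resort neighbor in $T^*$, observe from the proof of Lemma~\ref{lem:new-max-size} that every $A'\times B'$ edge at such a $u_k$ is slack with respect to the optimal dual $\vec\alpha$, and conclude via complementary slackness that $T^*$ is not primal-optimal, hence $\wt_{M_0}(T^*)<0$ and $\Delta(M_0,T)>0$. The only difference is that you spell out the counting step and the applicability of the realization $T^*$ more explicitly than the paper does.
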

\begin{proof}
Let $T$ be a larger matching than $M_0$. So some vertices left unmatched in $M_0$ have to be matched
in $T$. Thus for some $u_k \in A' \cup B'$ such that $(u_k,\ell(u_k)) \in M^*_0$, the matching
$T^*$ (which is the $(A'\cup B'$)-complete matching in $G'_{M_0}$ corresponding to $T$)
has to contain an edge $(u_k,v_t)$ where $v_t \ne \ell(u_k)$.
Thus $T^*$ contains a {\em slack} edge $(u_k,v_t) \in A' \times B'$ since $\alpha_{u_k} = 0$, 
$\alpha_{v_t} = 1$ while $\wt_{M_0}(u_k,v_t) = 0$ (see the proof of Lemma~\ref{lem:new-max-size}).

It now follows from Equation~(\ref{eqn:new}) that $T^*$ cannot be an optimal solution to the 
maximum weight $(A'\cup B'$)-complete matching problem in $G'_{M_0}$. Thus $\wt_{M_0}(T^*) < 0$, 
in other words, $\Delta(M_0,T) > 0$ since $\wt_{M_0}(T^*) = -\Delta(M_0,T)$. \qed
\end{proof}

Interestingly, Lemma~\ref{lem:max-size} implies that for any definition of popularity that is ``in between''
popularity and weak popularity, the size of a max-size popular matching is the same.
To formalize the meaning of ``in between'',
consider the two relations on matchings $\succsim_{\mathit{p}}$ and $\succsim_{\mathit{wp}}$, where $M_0\succsim_{\mathit{p}} M_1$ if
$\Delta(M_0,M_1)\ge 0$ and $M_0\succsim_{\mathit{wp}}~M_1$ if $\Delta(M_1,M_0)\le 0$, induced by popularity and weak popularity,
respectively. Clearly, ${\succsim_{\mathit{p}}}~\subseteq~{\succsim_{\mathit{wp}}}$. Note that  popular matchings and weakly popular
matchings correspond to
maximal elements of $\succsim_{\mathit{p}}$ and $\succsim_{\mathit{wp}}$, respectively.\footnote{$M_0$ is a maximal element of a relation $\succsim$ if for all elements $M_1$ we have: $M_1\succsim M_0$ implies $M_0\sim M_1$.}
We showed that $M_0$, which is a max-size maximal element of $\succsim_{\mathit{p}}$, is also a max-size maximal element of
$\succsim_{\mathit{wp}}$. This implies that if $\succsim$ is a relation on matchings (induced by an alternative notion of popularity)
such that ${\succsim_{\mathit{p}}} \subseteq {\succsim} \subseteq {\succsim_{\mathit{wp}}}$, then $M_0$ is also a max-size maximal
element of $\succsim$. This allows us to conclude the following proposition which even allows for different vertices
to compare sets of neighbors in different ways.

\begin{proposition}
\label{prop1}
  The size of a max-size popular matching in $G = (A \cup B,E)$ is invariant to the way 
vertices compare sets of neighbors as long as it is in between the most adversarial and the most favorable comparison.
\end{proposition}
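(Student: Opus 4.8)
The plan is to show Proposition~\ref{prop1} as an immediate consequence of the sandwich $\succsim_{\mathit{p}} \subseteq \succsim \subseteq \succsim_{\mathit{wp}}$ together with the two facts already in hand: (a) $M_0$ is a max-size maximal element of $\succsim_{\mathit{p}}$ (Theorem~\ref{thm:max-size} plus the correctness of $M_0$), and (b) $M_0$ is a max-size maximal element of $\succsim_{\mathit{wp}}$ (Lemma~\ref{lem:max-size} shows no strictly larger matching is weakly popular, and $M_0$ is itself popular hence weakly popular). The first step is to argue that $M_0$ remains a maximal element of any intermediate relation $\succsim$. If $T \succsim M_0$ for some matching $T$, then since $\succsim \subseteq \succsim_{\mathit{wp}}$ we get $T \succsim_{\mathit{wp}} M_0$, i.e.\ $\Delta(M_0,T) \le 0$; but $M_0$ is popular, so $\Delta(M_0,T) \ge 0$, giving $\Delta(M_0,T) = 0$. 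One should also check $M_0 \succsim T$: since $M_0$ is popular, $\Delta(M_0,T) \ge 0$ means $M_0 \succsim_{\mathit{p}} T$, and $\succsim_{\mathit{p}} \subseteq \succsim$ gives $M_0 \succsim T$. Hence $T \sim M_0$, so $M_0$ is maximal for $\succsim$.

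The second step is to show that $M_0$ has maximum size among all maximal elements of $\succsim$. Suppose $T$ is a maximal element of $\succsim$ with $|T| > |M_0|$. By Lemma~\ref{lem:max-size}, $\Delta(M_0,T) > 0$, which means $M_0 \succsim_{\mathit{p}} T$ strictly in the sense that $T \not\succsim_{\mathit{p}} M_0$ (since $\Delta(M_0,T) > 0$ forces, via $\Delta_u(M_0,T)+\Delta_u(T,M_0)\le 0$ summed over $u$, that $\Delta(T,M_0) < 0$, so $T \not\succsim_{\mathit{wp}} M_0$ and a fortiori $T\not\succsim M_0$). Meanwhile $\succsim_{\mathit{p}} \subseteq \succsim$ gives $M_0 \succsim T$. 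So we have $M_0 \succsim T$ but not $T \succsim M_0$, i.e.\ $M_0 \succ T$, which contradicts the maximality of $T$ for $\succsim$. Therefore every maximal element of $\succsim$ has size at most $|M_0|$, and combined with the first step (which shows $M_0$ is itself such a maximal element) we conclude $M_0$ is a max-size maximal element of $\succsim$, so the max-size maximal element has size exactly $|M_0|$.

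Finally, one notes that this argument never used any global uniformity in how vertices compare neighbor-sets: the relation $\succsim$ is only required to sit between $\succsim_{\mathit{p}}$ and $\succsim_{\mathit{wp}}$, and each vertex may independently pick any comparison rule between the most adversarial bijection (equation~(\ref{eq:delta})) and the most favorable one (equation~(\ref{eqn:weakly-popular})) — such a per-vertex choice still yields, for every pair of matchings, a value of $\Delta$ lying between $\Delta$ computed with all-adversarial choices and $-\Delta$ (of the reverse) computed with all-favorable choices, so the induced relation is still sandwiched. This gives the statement of Proposition~\ref{prop1}. I do not anticipate a serious obstacle here: the content is entirely a bookkeeping argument on the containment of relations, and the only point requiring mild care is the correct handling of the antisymmetry/indifference part ($T\sim M_0$) and the fact that $\Delta(M_0,T)>0$ really does rule out $T\succsim_{\mathit{wp}} M_0$, which follows from $\Delta(M_0,T)+\Delta(T,M_0)\le 0$.
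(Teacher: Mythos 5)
Your proof is correct and takes essentially the same route as the paper, which establishes Proposition~\ref{prop1} by exactly this sandwich argument: $M_0$ is a max-size maximal element of both $\succsim_{\mathit{p}}$ and $\succsim_{\mathit{wp}}$ (via the popularity of $M_0$ and Lemma~\ref{lem:max-size}), hence of any relation $\succsim$ with ${\succsim_{\mathit{p}}}\subseteq{\succsim}\subseteq{\succsim_{\mathit{wp}}}$. One tiny remark: in your second step the detour through $\Delta(T,M_0)<0$ is unnecessary (and as written that inequality does not by itself give $T\not\succsim_{\mathit{wp}}M_0$); the conclusion $T\not\succsim_{\mathit{wp}}M_0$ follows directly from $\Delta(M_0,T)>0$ by the definition of $\succsim_{\mathit{wp}}$.
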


\subsection{The rural hospitals theorem for max-size popular matchings}
The rural hospitals theorem for stable matchings~\cite{Roth86a} does not necessarily hold for
max-size popular matchings. That is, a hospital that is not matched up to 
capacity in some max-size popular matching is not necessarily matched to the same set of residents in every max-size 
popular matching.

Consider the instance $G = (R\cup H, E)$ with $R = \{r,r'\}$ and
$H = \{h,h'\}$ and $\capac(h) = 1$ and $\capac(h') = 2$.
The edge set is $R\times H$. 
The preferences are shown in the table below.
The (max-size) popular matchings are $M = \{(r,h),(r',h')\}$ (in black) and $M' = \{(r,h'),(r',h)\}$ (in red).
So $h'$ is matched to a different resident in the two max-size popular matchings $M$ and $M'$.
Note that $M'$ is not stable, as $(r,h)$ is a blocking pair.

	\begin{center}
		
		\begin{minipage}[c]{0.45\textwidth}
			
			\centering
			\begin{align*}
				&r\colon h, h'\qquad\qquad &&h\colon r, r'\\
				&r'\colon h, h' &&h'\colon r, r'\\
			\end{align*}
			
		\end{minipage}
		\hfill
		\begin{minipage}[c]{0.45\textwidth}
			
			\centering 
  			\begin{tikzpicture}
		
  				\def\radius{.15cm}
		
  				\node[label = left:{$r$}, draw, circle, fill, minimum size = \radius, inner sep = 0pt] (r) at (0,.5) {};
  				\node[label = left:{$r'$}, draw, circle, fill, minimum size = \radius, inner sep = 0pt] (r') at (0,-.5) {};

  				\node[label = right:{$h$}, draw, circle, fill, minimum size = \radius, inner sep = 0pt] (h) at (3,.5) {};
  				\node[label = right:{$h'$}, draw, circle, fill, minimum size = \radius, inner sep = 0pt] (h') at (3,-.5) {};

  				\draw[black, thick] (r) to (h);
  				\draw[black, thick] (r') to (h');
  				\draw[red, thick] (r') to (h);		
  				\draw[red, thick] (r) to (h');		
					
  			\end{tikzpicture}
	
		\end{minipage}
	
	\end{center}

However Lemma~\ref{lem:rural-hosp} holds here. Such a result for max-size popular matchings in the 
one-to-one setting (that every max-size popular matching has to match the same set of vertices) 
was shown in \cite{Hirakawa-MatchUp15}. Our proof is based on linear programming and is different
from the combinatorial proof in \cite{Hirakawa-MatchUp15}.

\begin{lemma}
\label{lem:rural-hosp}
Let $T$ be a max-size popular matching in $G$. Then $T$ matches the same vertices as $M_0$ (the matching computed in
Section~\ref{sec:hosp-res-algo}) and 
moreover, every vertex $u$ is matched in $T$ to the same capacity as it gets matched to in $M_0$.
\end{lemma}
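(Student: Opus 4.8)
The plan is to obtain the statement directly from the complementary-slackness analysis already carried out for Lemma~\ref{lem:new-max-size}, by a counting argument on last-resort edges. First I would record that $|T| = |M_0|$: since $M_0$ is popular (Theorem~\ref{thm:max-pop} via Theorem~\ref{main-thm}) and $T$ is a \emph{max-size} popular matching, we have $|T| \ge |M_0|$; conversely the discussion following Lemma~\ref{lem:new-max-size} shows that every popular matching in $G$ has size at most $|M_0|$, so $|T| \le |M_0|$, and hence $|T| = |M_0|$.

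Next I would fix the realization $T^*$ of $T$ in $G'_{M_0}$ built exactly as in the proof of Theorem~\ref{main-thm}. By construction, $T^*$ is $(A' \cup B')$-complete, each edge of $T$ contributes exactly one edge $(a_s,b_t) \in T^*$ using fresh copies of its endpoints, and every copy of a vertex not covered this way is matched to its last-resort neighbor; consequently the number of copies of a vertex $u$ matched to a non-last-resort neighbor in $T^*$ equals $|T(u)|$, and the analogous statement holds for $M^*_0$ with $|M_0(u)|$. Let $L_{M_0} = \{(u_k,\ell(u_k)) \in M^*_0\}$ and $L_T = \{(u_k,\ell(u_k)) \in T^*\}$ be the sets of last-resort edges. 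Lemma~\ref{lem:new-max-size} states precisely that $L_{M_0} \subseteq L_T$.

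Now I would count both sides. Since $M^*_0$ is $(A'\cup B')$-complete and consists of the $|M'_0| = |M_0|$ edges of $M'_0$ (each covering one vertex of $A'$ and one of $B'$) together with the last-resort edges of $L_{M_0}$ (each covering a single vertex), we get $|L_{M_0}| = |A'| + |B'| - 2|M_0|$. Applying the same count to $T^*$ gives $|L_T| = |A'| + |B'| - 2|T| = |A'| + |B'| - 2|M_0|$. Hence $|L_{M_0}| = |L_T|$, and together with $L_{M_0} \subseteq L_T$ this forces $L_{M_0} = L_T$. Therefore a copy $u_k$ is matched to a non-last-resort neighbor in $T^*$ if and only if it is so matched in $M^*_0$; reading this off vertex by vertex yields $|T(u)| = |M_0(u)|$ for every $u \in A \cup B$, so $u$ is matched in $T$ exactly when it is matched in $M_0$ and to the same capacity, which is the claim.

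I do not expect a real obstacle here; the only point requiring care is the bookkeeping that in any realization the number of non-last-resort edges incident to the copies of $u$ equals $|T(u)|$ — immediate from how realizations are assembled (one edge of $T^*$ per edge of $T$, on fresh copies), but it is the hinge that converts the set equality $L_{M_0} = L_T$ into equality of matched capacities.
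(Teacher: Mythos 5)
Your proof is correct and follows essentially the same route as the paper: both invoke Lemma~\ref{lem:new-max-size} to get that every copy unmatched in $M'_0$ is also matched to its last resort in $T^*$, and then use $|T|=|M_0|$ to upgrade that containment to equality of the unmatched sets (your explicit count $|L_{M_0}|=|A'|+|B'|-2|M_0|=|L_T|$ is just a slightly more detailed phrasing of the paper's "same size, hence same matched vertices" step).
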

\begin{proof}
Consider the realization $T^*$ of $T$ such that
$T^*$ is an $(A' \cup B')$-complete matching in $G'_{M_0}$ and $\wt_{M_0}(T^*) = -\Delta(M_0,T)$.
Since $T$ is popular, we know that $T^*$ has to include all the edges $(u_k,\ell(u_k))$ for vertices $u_k$
left unmatched in $M'_0$ (by Lemma~\ref{lem:new-max-size}).
Let $T' = T^* \setminus \{(u_k,\ell(u_k)): u_k \in A' \cup B'\}$.
Every vertex in $G'_{M_0}$ that is unmatched in $M'_0$ is left unmatched in $T'$ also.
We also have $|M'_0| = |M_0| = |T| = |T'|$ as both $M_0$ and $T$ are max-size popular matchings in $G$.
Hence $T'$ and $M'_0$ match the same vertices in $G'_{M_0}$, i.e., every 
vertex $v$ in $G$ is matched in $T$ to the same capacity as it gets matched to in $M_0$. \qed
\end{proof}

We also show the following results here: Lemma~\ref{lem:size-S} bounds the size of $M_0$, where
$M_0$ is a max-size popular matching in $G$ and Lemma~\ref{lem:min-size} shows
that a pairwise-stable matching is a min-size popular matching in $G$.
The proofs of Lemmas~\ref{lem:size-S} and \ref{lem:min-size} are inspired by
analogous proofs in the one-to-one setting shown in \cite{Kav12} and in \cite{HK11}, respectively.

\begin{lemma}
  \label{lem:size-S}
  $|M_0| \ge \frac{2}{3}|M_{\max}|$, where $M_{max}$ is a max-size matching in $G$.
\end{lemma}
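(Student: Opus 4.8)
The plan is to prove $|M_0| \ge \frac{2}{3}|M_{\max}|$ by a counting argument over the symmetric difference $M_0 \oplus M_{\max}$, treated inside the graph $H = (A'' \cup B, E'')$ on which the algorithm runs. The key structural fact I want to extract from the algorithm is that no augmenting-type configuration for $M_{\max}$ against $M_0$ can be ``too long'': because $M_0$ is produced by the 2-level Gale--Shapley algorithm, a vertex left unmatched (or under-capacity) in $M_0$ has exhausted all its proposals, and every course it is adjacent to is fully matched to partners it prefers (as level-1 or at least as-good neighbors). This is exactly the analogue of the fact in the one-to-one setting (\cite{Kav12}) that max-size popular matchings forbid augmenting paths of length $5$ or more and certain alternating cycles. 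I would lift that to the many-to-many case by working with the copies in $A' \cup B'$ and the realizations $M'_0, M^*_0$, so that $M'_0$ is a simple matching and standard alternating-path reasoning applies vertex-by-vertex on the copies.

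**Key steps.** First I would fix a max-size matching $M_{\max}$ in $G$ and a realization of it as a simple matching on the copies $A' \cup B'$, aligned with $M'_0$ so that the symmetric difference decomposes into alternating paths and cycles in the graph on $A' \cup B'$. Second, using the description of the algorithm I would establish the crucial local claim: every copy $u_k \in A' \cup B'$ that is unmatched in $M'_0$ has all its $G'_{M_0}$-neighbors ``saturated and preferring their $M'_0$-partner'' — concretely, a student copy left unmatched is an $A'_1$ copy all of whose neighbors lie in $B'_1$ (this is already observed in the proof of Lemma~\ref{lem:new-max-size}), and symmetrically for courses. Third, I would argue that this forces every alternating path in $M'_0 \oplus M_{\max}$ whose endpoints are unmatched in $M'_0$ to have length at least... — i.e. it must contain at least two $M'_0$-edges for every $M_{\max}$-edge beyond the first — because stepping from an unmatched endpoint through an $M_{\max}$-edge lands in $B'_1$, whose $M'_0$-partner is in $A'_1$, which again has all neighbors in $B'_1$, and one checks the pattern cannot terminate quickly on the other side. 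Counting $M_{\max}$-edges against $M_0$-edges on each such path (and noting alternating cycles contribute equally to both matchings) then yields $|M_{\max}| \le \frac{3}{2}|M_0|$, which is the claim.

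**Main obstacle.** The hard part will be the third step: turning the local ``all neighbors are in $B'_1$'' statement into a global bound on path lengths, in the many-to-many setting where a single vertex $u$ spreads across several copies $u_1,\dots,u_{\capac(u)}$ that may sit in different parts of the $A'_0/A'_1$ partition and may appear on several different alternating paths. I need to check carefully that the level-0/level-1 bookkeeping is consistent across copies of the same vertex — e.g. that a course $b$ with some copies in $B'_0$ and some in $B'_1$ still cannot create a short augmenting path — and that the edge-presence rules for $G'_{M_0}$ (only one copy of each $M_0$-edge, all copies of each non-edge) do not introduce spurious short paths. I would handle this by always reasoning with the edges actually present in $G'_{M_0}$ and invoking Claim~\ref{lem:G-M-edges}: an edge of $M'_0 \oplus M_{\max}$ in $A'_1 \times B'_0$ is impossible as a $G'_{M_0}$-edge would have weight $-2$, so the alternating paths are in fact confined to $(A'_0 \times B'_0) \cup (A'_1 \times B'_1)$ together with transitions through $M'_0$-edges, and this confinement is what pins down their length. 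A clean alternative, if the direct count gets delicate, is to re-derive the bound from the LP/dual machinery already set up: a larger $M_{\max}$ would give a realization $M_{\max}^*$ with many slack edges, and tracking how much weight is lost per unmatched copy gives the same $\frac{2}{3}$ factor; I would mention this as a fallback but expect the combinatorial path-counting to be the intended route.
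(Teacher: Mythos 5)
Your overall strategy is exactly the paper's: realize $M_{\max}$ on the copies in $G'_{M_0}$ (minus last-resort vertices), decompose $M'_0\oplus M_{\max}$ into alternating paths and cycles, show every augmenting path has length at least $5$, and conclude $|M_{\max}|\le\frac32|M_0|$. Your first two steps are also the right ones, and the endpoint fact you cite from the proof of Lemma~\ref{lem:new-max-size} (an unmatched student copy lies in $A'_1$ and has all its $G'_{M_0}$-neighbors in $B'_1$; symmetrically an unmatched course copy lies in $B'_0$ with all neighbors in $A'_0$) is precisely what the paper uses.

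However, your third step as written is wrong, and in a way that matters. You claim that an edge of $G'_{M_0}$ in $A'_1\times B'_0$ is ``impossible'' because Claim~\ref{lem:G-M-edges}~(i) assigns it weight $-2$, and hence that the alternating paths are confined to $(A'_0\times B'_0)\cup(A'_1\times B'_1)$, with every matched $A'_1$ copy ``again'' having all neighbors in $B'_1$. Neither assertion holds: Claim~\ref{lem:G-M-edges}~(i) says such edges have weight $-2$, not that they are absent from $G'_{M_0}$, and a matching such as $M_{\max}$ is free to use negative-weight edges. The neighbor restriction to $B'_1$ is proved only for \emph{unmatched} $A'_1$ copies (it relies on $\wt_{M_0}(u_k,v_t)\ge\vote_u(v,\ell(u_k))+(-1)\ge 0$, which needs $N^*(u_k)=\ell(u_k)$). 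Indeed, if your confinement claim were true, an alternating path starting at an unmatched $A'_1$ copy could never reach an unmatched $B'_0$ copy at all, so there would be no augmenting paths and you would ``prove'' $|M_0|=|M_{\max}|$, which is false in general. The correct finish uses only the two endpoint facts together with $M'_0\subseteq(A'_0\times B'_0)\cup(A'_1\times B'_1)$: from the $A'_1$ end the path must go $A'_1\!-\!B'_1\!-\!A'_1$, from the $B'_0$ end it must go $B'_0\!-\!A'_0\!-\!B'_0$, and the crossing edge in the middle (which does lie in $A'_1\times B'_0$) makes the path have length at least $5$; the $\frac23$ bound then follows from the usual count on each path and cycle of the decomposition.
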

\begin{proof}
  The size of the matching $M_0$ is exactly the same as $M'_0$. Consider the graph $G'_{M_0}$ without
 last resort neighbors (call this graph $G''_{M_0}$) -- we will show that every augmenting path with 
 respect to $M_0'$ here has length at least 5. This will immediately imply that $|M'_0| \ge 2c/3$ 
 where $c$ is the size of the max-size matching in $G''_{M_0}$. The value $c \ge |M_{\max}|$ since 
 corresponding to any matching $T$ in $G$, we have a matching $T'$ in  $G''_{M_0}$ such that 
 $|T| = |T'|$. Thus the size of a max-size matching in $G''_{M_0}$ is at least $|M_{\max}|$ and so we get
 $|M_0| \ge 2|M_{\max}|/3$.

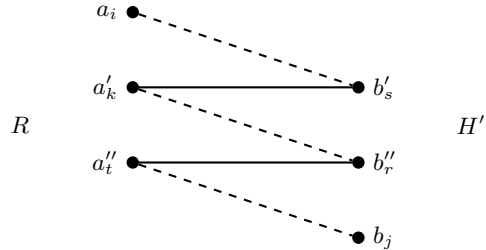
\begin{figure}[h]
	
  	\begin{center}
	
  	\begin{tikzpicture}
		
  		\def\radius{.15cm}
		
  		\node[label = left:{$a_i$}, draw, circle, fill, minimum size = \radius, inner sep = 0pt] (r_1) at (0,0) {};
  		\node[label = left:{$a'_k$}, draw, circle, fill, minimum size = \radius, inner sep = 0pt] (r_2) at (0,-1) {};
  		\node[label = left:{$a''_t$}, draw, circle, fill, minimum size = \radius, inner sep = 0pt] (r_3) at (0,-2) {};
  	
  		\node[label = right:{$b'_s$}, draw, circle, fill, minimum size = \radius, inner sep = 0pt] (h_2) at (3,-1) {};
  		\node[label = right:{$b''_r$}, draw, circle, fill, minimum size = \radius, inner sep = 0pt] (h_3) at (3,-2) {};
  		\node[label = right:{$b_j$}, draw, circle, fill, minimum size = \radius, inner sep = 0pt] (h_4) at (3,-3) {};
		
  		\draw[thick] (r_2) to (h_2);
  		\draw[thick] (r_3) to (h_3);

  		\draw[dashed, thick] (r_1) to (h_2);
  		\draw[dashed, thick] (r_2) to (h_3);
  		\draw[dashed, thick] (r_3) to (h_4);
		
		\node (R) at (-1.5, -1.5) {$R$};
		\node (H') at (4.5, -1.5) {$H'$};
		
  	\end{tikzpicture}
	
  	\end{center}
	
    \caption{An augmenting path with respect to $M'_0$ in the graph $G''_{M_0}$: the vertices $a_i, a'_k$ have to belong to $A'_1$ while the vertices  $b_j, b''_r$ have to belong to $H'_0$. Thus the length is of this path is $\ge 5$.}
    \label{fig3:label}
	
  \end{figure}

 Consider any augmenting path $p$ with respect to $M'_0$ in the graph $G''_{M_0}$  (see Fig.~\ref{fig3:label}) --
 let the endpoints of $p$ be $a_i$ and $b_j$. Since these vertices are left unmatched in $M'_0$, it follows that
 $a_i \in A'_1$ and $b_j \in B'_0$. As seen in the proof of Lemma~\ref{lem:new-max-size}, the vertex $a_i$ is adjacent
 only to vertices in $B'_1$ in the graph $G''_{M_0}$ and the unmatched vertex $b_j$ is adjacent only to vertices in $A'_0$
 in the graph $G''_{M_0}$. Every vertex in $A'_0$ is matched in $M_0$ to a neighbor in $B'_0$ and every vertex in $B'_1$ is
 matched in $M_0$ to a neighbor in $A'_1$. Thus the shortest augmenting path with respect to $M'_0$ has the following structure
 with respect to the sets in $\{A'_0,A'_1,B'_0,B'_1\}$ that its vertices belong to: $A'_1$-$B'_1$-$A'_1$-$B'_0$-$A'_0$-$B'_0$, i.e.,
 its length is at least 5. \qed
\end{proof}

\begin{lemma}
\label{lem:min-size}  
  A pairwise-stable matching $S$ is a min-size weakly popular matching in $G$.
\end{lemma}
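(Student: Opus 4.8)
The plan is to establish two things: that $S$ is weakly popular, and that no weakly popular matching in $G$ can be strictly smaller than $S$. The first is immediate from the Corollary above: $S$ is popular, so $\Delta(S,M_1)\ge 0$ for every matching $M_1$, and since $\Delta(M_0,S)+\Delta(S,M_0)\le 0$ holds for any pair of matchings, we get $\Delta(M_0,S)\le -\Delta(S,M_0)\le 0$ for every $M_0$; hence $S$ is weakly popular. (Recall also that by the Rural Hospitals Theorem all pairwise-stable matchings have the same size, so ``$|S|$'' is well defined.)

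For the second part I would argue by contradiction: suppose $T$ is a weakly popular matching with $|T|<|S|$, and work inside the graph $G'_S$ associated with $S$, built exactly as in Section~\ref{sec:analysis}. Precisely as in the proof of Theorem~\ref{main-thm}, construct the realization $T^*$ of $T$ in $G'_S$; this is an $(A'\cup B')$-complete matching with $\wt_S(T^*)=-\Delta(S,T)$. Since $S$ has no blocking edge, every edge $e$ of $G'_S$ satisfies $\wt_S(e)\le 0$, so it suffices to exhibit a single edge of $T^*$ of strictly negative weight: this forces $\wt_S(T^*)<0$, hence $\Delta(S,T)>0$, contradicting the weak popularity of $T$ (which requires $\Delta(S,T)\le 0$).

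To locate such an edge I would count copies of vertices. Since $\sum_{u\in A\cup B}|S(u)|=2|S|>2|T|=\sum_{u\in A\cup B}|T(u)|$, there is a vertex $u$ with $|S(u)|>|T(u)|$. In $G'_S$ exactly $|S(u)|$ of the $\capac(u)$ copies of $u$ are matched in $S'$, and for each such copy $u_k$ we have $\wt_S(u_k,\ell(u_k))=-1$, while the remaining $\capac(u)-|S(u)|$ copies satisfy $\wt_S(u_k,\ell(u_k))=0$. On the other hand $T^*$ matches only $|T(u)|$ copies of $u$ via edges of $A'\times B'$, so $\capac(u)-|T(u)|$ copies of $u$ are matched to their last resort neighbours in $T^*$; since $\capac(u)-|T(u)|>\capac(u)-|S(u)|$, by pigeonhole at least one of these last-resort-matched copies $u_k$ is one that is matched in $S'$, i.e.\ has $\wt_S(u_k,\ell(u_k))=-1$. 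Therefore $\wt_S(T^*)\le -1<0$, giving $\Delta(S,T)\ge 1>0$, the desired contradiction. Hence every weakly popular matching has size at least $|S|$, and $S$ is a min-size weakly popular matching in $G$.

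The main obstacle is exactly the bookkeeping in the last paragraph: the last-resort edges $(u_k,\ell(u_k))$ used by the realization $T^*$ need not be the copies of $u$ left unmatched in $S'$, so one cannot simply read off a negative-weight edge; the pigeonhole comparison between $\capac(u)-|T(u)|$ and $\capac(u)-|S(u)|$ is precisely what guarantees that $T^*$ is forced to use a strictly negative last-resort edge. Everything else --- the construction of $T^*$, the identity $\wt_S(T^*)=-\Delta(S,T)$, and the bound $\wt_S(e)\le 0$ in $G'_S$ --- is already available from Theorem~\ref{main-thm} and its Corollary.
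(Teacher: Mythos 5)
Your proof is correct and takes essentially the same route as the paper's: realize $T$ as an $(A'\cup B')$-complete matching $T^*$ in $G'_S$, use pairwise-stability to get $\wt_S(e)\le 0$ on every edge, and then exhibit a copy $u_k$ that is matched to a genuine neighbor in $S'$ but to $\ell(u_k)$ in $T^*$, forcing $\wt_S(T^*)\le -1$ and hence $\Delta(S,T)>0$. Your explicit pigeonhole count at a vertex $u$ with $|S(u)|>|T(u)|$, and the explicit derivation that popularity of $S$ implies its weak popularity via $\Delta(M_0,S)+\Delta(S,M_0)\le 0$, merely spell out steps the paper leaves implicit.
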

\begin{proof}
  Let $T$ be a matching in $G$ such that $|T| < |S|$. Consider a realization $T^*$ of the matching $T$ in the graph $G'_S$ as
  described in the proof of Theorem~\ref{main-thm} such that $T^*$ is $(A' \cup B')$-complete and $\wt_S(T^*) = -\Delta(S,T)$.
  Recall that $S$ is a pairwise-stable matching in $G$ -- hence for each edge $e$
  in $G'_S$, we have $\wt_S(e) \le 0$. Moreover, because $|T| < |S|$, there is a
  vertex $u_i$ that is matched to a genuine neighbor in $S$, however $T^*$ contains the edge $e = (u_i,\ell(u_i))$.
  We have $\wt_S(e) = -1$. Thus $\wt_S(T^*) < 0$. In other words, $\Delta(S,T) > 0$ and $T$ cannot be weakly popular. Since $S$ 
  is a popular matching in $G$, it means that $S$ is a min-size popular matching in $G$, in fact, $S$ is a min-size weakly popular
  matching in $G$.  \qed
\end{proof}

\medskip

\noindent{\em Acknowledgments.} The first author wishes to thank Larry Samuelson for comments on the motivation for popular matchings.
The second author wishes to thank David Manlove and Bruno Escoffier for asking her about popular matchings in the hospitals/residents setting.

\section*{Appendix}

\setcounter{subsection}{0}
\renewcommand{\thesubsection}{\Alph{subsection}}

\subsection{A naive approach for finding max-size popular matchings}

Given a many-to-many matching instance $G = (A \cup B,E)$, we investigate the possibility of constructing a corresponding one-to-one
matching instance $G' = (A'\cup B',E')$ (with \emph{strict} preference lists) in order to show a reduction from the max-size popular matching 
problem in $G$ to one in $G'$. The vertex set $A'$ will have $\capac(a)$ many copies $a_1, a_2, \ldots$ of every $a \in A$ and 
$B'$ will have $\capac(b)$ many copies $b_1, b_2, \ldots$ of every $b \in B$; the edge set $E'$ has $\capac(a)\cdot\capac(b)$ many 
copies of edge $(a,b)$ in $E$. If $v \succ_u v'$ in $G$ then we have $v_i \succ_{u_k} v'_j$ for each
$i \in \{1,\ldots,\capac(v)\}$, $j \in \{1,\ldots,\capac(v')\}$, and $k \in \{1,\ldots,\capac(u)\}$. Among the copies 
$v_1,\ldots,v_{\capac(v)}$ of the same vertex $v$, we will set $v_1 \succ_{u_k} \cdots \succ_{u_k} v_{\capac(v)}$, for each vertex $u_k$ in $G'$.

Given any matching $\tilde{M}$ in $G'$, we define $\proj(\tilde{M})$ as the projection of $\tilde{M}$, which is the matching obtained by 
dropping the subscripts of all vertices; $\proj(\tilde{M})$ obeys all capacity bounds in $G$.
We will now consider the  {\em many-to-one} or the hospitals/residents setting: so there
are no multi-edges in $\proj(\tilde{M})$. It would be interesting to be able to
show that every popular matching $M$ in $G$ has a {\em realization} $\tilde{M}$ 
in $G'$ (i.e., $\proj(\tilde{M}) = M$) such that $\tilde{M}$ is a popular matching in $G'$.

However the above statement is not true as shown by the following example. Let
$G = (R \cup H, E)$ where $R = \{p,q,r,s\}$ and $H = \{h,h',h''\}$ where $\capac(h) = 2$ and
$\capac(u) = 1$ for all other vertices $u$. The preference lists are as follows:

\begin{minipage}[c]{0.45\textwidth}
			
			\centering
			\begin{align*}
				&p\colon h, h'' \qquad\qquad &&h\colon p, q, r, s\\
				&q\colon h, h' &&h'\colon q\\
                                &r\colon h      &&h''\colon p\\
                                &s\colon h      &&\\
			\end{align*}
		\end{minipage}

Consider the matching $N = \{(p,h),(q,h'),(r,h)\}$. We show below that $N$ satisfies the sufficient
condition for popularity as given in Theorem~\ref{main-thm}. The proof of Claim~\ref{Appendix-clm1}
follows the same approach as used in the proof of Theorem~\ref{thm:max-pop}.

\begin{new-claim}
\label{Appendix-clm1}
$N$ is popular in $G$.
\end{new-claim}
\begin{proof}
We have $R' = \{p_1,q_1,r_1,s_1\}$ and $H' = \{h_1,h_2,h'_1,h''_1\}$. Here we use the notation introduced at the beginning of 
Section~\ref{sec:analysis}: let $N' = \{(p_1,h_1),(q_1,h'_1),(r_1,h_2)\}$ (see Fig.~\ref{fig:Appendix-example}). 

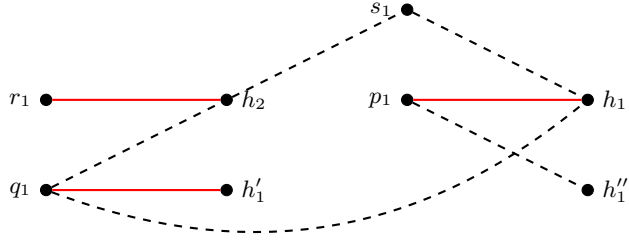
\begin{figure}[h]
  	\begin{center}
	
  	\begin{tikzpicture}
		
		\def\size{1.2}
  		\def\radius{.15cm}
		
  		\node[label = left:{$r_1$}, draw, circle, fill, minimum size = \radius, inner sep = 0pt] (r_1) at (0,0) {};
  		\node[label = left:{$q_1$}, draw, circle, fill, minimum size = \radius, inner sep = 0pt] (q_1) at (0,-1*\size) {};
  		\node[label = right:{$h_2$}, draw, circle, fill, minimum size = \radius, inner sep = 0pt] (h_2) at (2*\size,0) {};
  		\node[label = right:{$h_1'$}, draw, circle, fill, minimum size = \radius, inner sep = 0pt] (h_1') at (2*\size,-1*\size) {};
		
  		\node[label = left:{$s_1$}, draw, circle, fill, minimum size = \radius, inner sep = 0pt] (s_1) at (4*\size,1*\size) {};
  		\node[label = left:{$p_1$}, draw, circle, fill, minimum size = \radius, inner sep = 0pt] (p_1) at (4*\size,0) {};
  		\node[label = right:{$h_1$}, draw, circle, fill, minimum size = \radius, inner sep = 0pt] (h_1) at (6*\size,0) {};
  		\node[label = right:{$h_1''$}, draw, circle, fill, minimum size = \radius, inner sep = 0pt] (h_1'') at (6*\size,-1*\size) {};
		
  		\draw[red, thick] (r_1) to (h_2);
		\draw[red, thick] (q_1) to (h_1');
		\draw[red, thick] (p_1) to (h_1);
		
  		\draw[dashed, thick] (q_1) to (h_2);
  		\draw[dashed, thick] (h_2) to (s_1);
  		\draw[dashed, thick] (s_1) to (h_1);
		\draw[dashed, thick] (p_1) to (h_1'');
		\draw[dashed, thick, bend right] (q_1) to (h_1);
		
  	\end{tikzpicture}
  	\end{center}
\caption{The edges of the matching $N'$ are in red and the non-matching edges in $G'_N$ are dashed. For simplicity, we have not included last resort neighbors here. Note that the edge $(q_1,h_2)$ is a blocking edge to $N'$ as both $q_1$ and $h_2$ prefer each other to their respective partners in $N'$, i.e., $\wt_{N}(q_1,h_2) = 2$ and $\wt_{N}(e) = 0$ for all
  other edges $e$ in $G'_N$.}
\label{fig:Appendix-example}
\end{figure}

We need to show that every $(R'\cup H')$-complete matching in 
the weighted graph $G'_N$ has weight at most 0. 
We will show this by constructing a {\em witness} or a solution to the dual LP corresponding to the primal LP which is the 
$(R'\cup H')$-complete max-weight matching problem in $G'_N$. 
This solution is the following:
$\alpha_{p_1} = \alpha_{h_1} = \alpha_{s_1} = \alpha_{h''_1} = 0$ while $\alpha_{q_1} = \alpha_{h_2} = 1$ and $\alpha_{r_1} = \alpha_{h'_1} = -1$.
The above solution is dual-feasible since every edge in $G'_N$ is covered by the sum of $\alpha$-values of its endpoints -- in particular, note that
$\alpha_{q_1} + \alpha_{h_2} = 2 = \wt_{N}(q_1,h_2)$. The dual
optimal solution is at most $\sum_{u \in R' \cup H'}\alpha_u = 0$.
So the primal optimal solution is also at most 0, in other words, $N$ is a popular matching in $G$. \qed
\end{proof}

Note that the graph $G'$ has two {\em extra} edges relative to $G'_N$: these are $(p_1,h_2)$
and $(r_1,h_1)$. With respect to realizations of $N$ in $G'$, there are 2 candidates: these are
$N_1 = \{(p_1,h_1),(q_1,h'_1),(r_1,h_2)\}$ and $N_2 = \{(p_1,h_2),(q_1,h'_1),(r_1,h_1)\}$. 

\begin{new-claim}
Neither $N_1$ nor $N_2$ is popular in $G'$.
\end{new-claim}
\begin{proof}
Consider the matching $M_1 = \{(p_1,h''_1),(q_1,h_2),(r_1,h_1)\}$. The vertices $p_1$, $h_1$, and $h'_1$ prefer $N_1$ to $M_1$ while the
vertices $q_1$, $h_2$, $r_1$, and $h''_1$ prefer $M_1$ to $N_1$ and $s_1$ is indifferent.
Thus $M_1$ is more popular than $N_1$, i.e., $N_1$ is not a popular matching in $G'$.

Consider the matching $M_2 = \{(p_1,h_1),(q_1,h'_1),(s_1,h_2)\}$. The vertices $r_1$ and $h_2$ prefer $N_2$ to $M_2$ while the vertices
$p_1$, $h_1$, and $s_1$ prefer $M_2$ to $N_2$ and $q_1$, $h'_1$, and $h''_1$ are indifferent.
Thus $M_2$ is more popular than $N_2$, i.e., $N_2$ is not a popular matching in $G'$. \qed
\end{proof}

Note that the above instance can easily be transformed to another instance $G$ with a {\em max-size} popular matching that cannot
be realized as a popular matching in $G'$. 
In fact, it is easy to show that every popular matching in $G'$ gets projected to a popular 
matching in $G$. However as illustrated by the example above, there may exist popular matchings in $G$ that cannot be realized as popular
matchings in $G'$.

\subsection{Running time of our algorithm}
It is known that the Gale-Shapley algorithm for the hospitals/residents problem (and thus in the many-to-many setting)
can be implemented to run in linear time~\cite{manlove2015}. Our algorithm is very similar to the Gale-Shapley algorithm
in the many-to-many setting. For the sake of completeness, we describe a simple linear time implementation of our algorithm.

Recall that our algorithm runs in the graph $H$ whose vertex set is $A'' \cup B$. For each vertex $u \in A'' \cup B$, right at
the beginning of the algorithm,
we will construct an array $D_u$ that contains all neighbors of $u$ in decreasing order of preference.
More precisely, $D_u$ is an array of length $\deg_H(u)$ where $D_u[i]$ contains the identity of the $i$-th ranked neighbor of $u$.  
For $a \in A''$ and $1 \le i \le \deg_H(a)$, the cell $D_a[i]$ also stores the value $\rank_b(a)$, where $b$ is $a$'s $i$-th
ranked neighbor in $H$ and $\rank_b(a)$ is the rank of $a$ in $b$'s preference order in $H$.

It is easy to see that all the arrays $D_u$ can be constructed in  $O(m+n)$ time, where $m = |E|$ and $n = |A|+|B|$.
We will maintain the function $\maxrank(\cdot)$ defined below for each $b \in B$:
\begin{equation*} 
\label{maxrank-defn} 
    \maxrank(b) = \begin{cases} \text{rank of $b$'s worst neighbor in $M$} & \text{if $b$ is fully matched in $M$}\\
                                         \infty & \text{otherwise}.
    \end{cases}
\end{equation*}    

For any $a \in A''$, before $a$ proposes to a neighbor $b$, the vertex $a$ first checks if $\rank_b(a) < \maxrank(b)$ and goes ahead
with the proposal only if this condition is satisfied. If this condition is not satisfied, i.e., if $b$ ranks $a$ worse than
$\maxrank(b)$, then the edge $(a,b)$ is actually missing in the current graph $H$ and so $a$ does not propose to $b$.

For any $b \in B$, when $b$ receives $a$'s proposal (we assume that $a$'s proposal also contains the value $j = \rank_b(a)$),
$b$ accepts this proposal by setting a flag in the cell $D_b[j]$ to {\em true}. Thus for each $b \in B$ and
$1 \le j \le \deg_H(b)$, along with the identity $a$ of $b$'s $j$-th ranked neighbor in $H$, the cell $D_b[j]$
also contains a flag set to {\em true} if $(a,b) \in M$, else the flag is set to {\em false}. Thus the entire
matching $M$ is stored via these flags in the arrays $D_b$ for $b \in B$.

If $|M(b)| > \capac(b)$ after $b$ accepts $a$'s proposal, then we set the flag in the cell $D_b[\maxrank(b)]$ to
{\em false}. Also, $\maxrank(b)$ needs to be updated and this is done by using a pointer to traverse the array $D_b$
leftward from its current location to find the rightmost cell whose flag is set to {\em true}.

Consider the step where we check if $b$ is already matched to $u^0$ when $u^1$ proposes to $b$.
This check can be easily performed by comparing $u^0$'s rank in $b$'s preference order with $\maxrank(b)$. Note that $u^0$'s rank
in $b$'s preference order is $\rank_b(u^1) + \deg_G(b)$. If $(u^0,b) \in M$, then we need to replace $(u^0,b)$ in $M$ with $(u^1,b)$ --
this step is implemented by setting the flag in $D_b[\rank_b(u^0)]$ to {\em false} and the flag in $D_b[\rank_b(u^1)]$ to {\em true},
so this takes unit time.
It is easy to see that our entire algorithm can be implemented to run in linear time.

\begin{thebibliography}{22}
\bibitem{AIKM07}
D.J. Abraham, R.W. Irving, T.~Kavitha, and K.~Mehlhorn.
\newblock {\em Popular matchings.}
\newblock SIAM Journal on Computing, 37(4):~1030--1045, 2007.

\bibitem{askalidis2013}
G. Askalidis, N. Immorlica, A. Kwanashie, D. Manlove, and E. Pountourakis.
\newblock {\em Socially Stable matchings in the Hospitals/Residents problem.}
\newblock In the 13th International Symposium on Algorithms and Data Structures (WADS):~85--96, 2013. 

\bibitem{Bla88}
C. Blair.
\newblock {\em The lattice structure of the set of stable matchings with multiple partners.}
\newblock Mathematics of Operations Research 13:~619--628, 1988.

\bibitem{BIM10}
P. Biro, R. W. Irving, and D. F. Manlove.
\newblock {\em Popular Matchings in the Marriage and Roommates Problems.}
\newblock In the 7th International Conference in Algorithms and Complexity (CIAC):~97--108, 2010.

\bibitem{CRMS}
Canadian Resident Matching Service.
\newblock{\em How the matching algorithm works.} 
\newblock Web document available at \url{http://carms.ca/algorithm.htm}.

\bibitem{condorcet}
Condorcet method.
\newblock \url{https://en.wikipedia.org/wiki/Condorcet_method}

\bibitem{CHK15}
\'{A}. Cseh, C.-C. Huang, and T. Kavitha.
\newblock {\em Popular matchings with two-sided preferences and one-sided ties}. 
\newblock In the 42nd International Colloquium on Automata, Languages, and Programming (ICALP): Part~I,~367--379, 2015.

\bibitem{CK16}
\'{A}. Cseh and T. Kavitha.
\newblock {\em Popular edges and dominant matchings.}
\newblock In the 18th International Conference on Integer Programming and Combinatorial Optimization (IPCO):~138--151, 2016. 

\bibitem{GS62}
D.~Gale and L.S. Shapley.
\newblock {\em College admissions and the stability of marriage.}
\newblock American Mathematical Monthly, 69(1):~9--15, 1962.

\bibitem{GS85}
D.~Gale and M.~Sotomayor.
\newblock {\em Some remarks on the stable matching problem.}
\newblock Discrete Applied Mathematics, 11(3):~223--232, 1985.

\bibitem{Gar75}
P. G\"ardenfors.
\newblock {\em Match making: assignments based on bilateral preferences.}
\newblock Behavioural Sciences, 20(3):~166--173, 1975.

\bibitem{GI89}
D. Gusfield and R. W. Irving.
\newblock {\em The Stable Marriage Problem: Structure and Algorithms.}
\newblock MIT Press, Boston, MA 1989.

\bibitem{HIM16}
K. Hamada, K. Iwama, Shuichi Miyazaki.
\newblock {\em The Hospitals/Residents Problem with Lower Quotas.} 
\newblock Algorithmica, 74(1):~440--465, 2016.

\bibitem{Hirakawa-MatchUp15}
M. Hirakawa, Y. Yamauchi, S. Kijima, and M. Yamashita.
\newblock {\em On The Structure of Popular Matchings in The Stable Marriage Problem - Who Can Join a Popular Matching?}
\newblock In the 3rd International Workshop on Matching Under Preferences (MATCH-UP), 2015.

\bibitem{huang2010}
C.-C.~Huang.
\newblock {\em Classified stable matching.} 
\newblock In the 21st Annual ACM-SIAM Symposium on Discrete Algorithms (SODA):~1235--1253, 2010.

\bibitem{HK11}
C.-C.~Huang and T.~Kavitha.
\newblock {\em Popular matchings in the stable marriage problem.} 
\newblock Information and Computation, 222:~180--194, 2013.

\bibitem{HK17}
C.-C.~Huang and T.~Kavitha.
\newblock {\em Popularity, Self-Duality, and Mixed matchings.} 
\newblock In the 28th Annual ACM-SIAM Symposium on Discrete Algorithms (SODA): 2294-2310, 2017.


\bibitem{IMS00}
R.~W.~Irving, D.~F.~Manlove, and S.~Scott.
\newblock{\em The Hospitals/Residents Problem with Ties.}
\newblock In the 7th Scandinavian Workshop on Algorithm Theory (SWAT):~259--271, 2000.

\bibitem{IMS03}
R.~W.~Irving, D.~F.~Manlove, and S.~Scott.
\newblock{\em Strong Stability in the Hospitals/Residents Problem.}
\newblock In the 20th Annual Symposium on Theoretical Aspects of Computer Science (STACS):~439--450, 2003.

\bibitem{Kav12}
T.~Kavitha.
\newblock {\em A size-popularity tradeoff in the stable marriage problem.}
\newblock SIAM Journal on Computing, 43(1):~52--71, 2014.

\bibitem{Kav16}
T.~Kavitha.
\newblock {\em Popular half-integral matchings.}
\newblock In the 43rd International Colloquium on Automata, Languages, and Programming (ICALP):~22.1-22.13, 2016.

\bibitem{KMN09}
T. Kavitha, J. Mestre, and M. Nasre.
\newblock {\em Popular mixed matchings.}
\newblock Theoretical Computer Science, 412(24):~2679--2690, 2011.

\bibitem{manlove2013}
D.~F.~Manlove.
\newblock {\em Algorithmics of Matching Under Preferences.} 
\newblock World Scientific Publishing Company Incorporated, 2013.

\bibitem{manlove2015}
D.~F.~Manlove.
\newblock {\em The Hospitals/Residents Problem.} 
\newblock Encyclopedia of Algorithms:~926--930, 2015.

\bibitem{NR17}
M. Nasre and A. Rawat.
\newblock {\em Popularity in the generalized Hospital Residents setting.}
\newblock In the 12th International Computer Science Symposium in Russia (CSR): 245-259, 2017.

\bibitem{NRMP}
National Resident Matching Program.
\newblock {\em Why the Match?}
\newblock Web document available at \url{http://www.nrmp.org/whythematch.pdf}.

\bibitem{Roth84}
A. E. Roth.
\newblock {\em The evolution of the labor market for medical interns and residents: a case study in game theory.}
\newblock Journal of Political Economy: 92(6):~991--1016, 1984.

\bibitem{Roth84b}
A. E. Roth.
\newblock {\em Stability and polarization of interest in job matching.}
\newblock Econometrica 53:~47--57, 1984.

\bibitem{Roth86a}
A.E. Roth
\newblock {\em On the allocation of residents to rural hospitals: {A} general property of two-sided matching markets.}
\newblock Econometrica, Vol. 54(2):~425--427, 1986.

\bibitem{Sot99}
M. Sotomayor.
\newblock {\em Three remarks on the many-to-many stable matching problem.}
\newblock Mathematical Social Sciences 38:~55--70, 1999.
\end{thebibliography}
\end{document}